\newtheorem{theorem}{Theorem}
\newtheorem{lemma}{Lemma}
\newenvironment{proof}{\paragraph{Proof:}}{\hfill$\square$}
\title{Relating Graph Thickness to Planar Layers and\\ Bend Complexity\footnote{A preliminary version appeared at the 43rd International Colloquium on Automata, Languages and Programming (ICALP 2016).}}
\author{Stephane Durocher\thanks{Work of the author is supported in part by the Natural Sciences and Engineering Research Council of Canada
 (NSERC).} } 
\author{Debajyoti Mondal}
\affil{Department of Computer Science, University of Manitoba, 
  Winnipeg,  Canada\\
  \texttt{\{durocher,jyoti\}@cs.umanitoba.ca}}
\begin{document}

\maketitle

\begin{abstract}
The thickness  of a graph $G=(V,E)$ with $n$ vertices
 is the minimum number of planar  subgraphs of $G$ whose union is $G$. A polyline drawing  of $G$ in $\mathbb{R}^2$ is a drawing $\Gamma$ of $G$, where each vertex is
 mapped  to a point and each edge is mapped to a polygonal chain. Bend
   and  layer complexities are two important aesthetics of such
 a drawing. The bend complexity of $\Gamma$ is the maximum number of bends
 per edge in $\Gamma$, and the  layer complexity of $\Gamma$ is the
 minimum integer $r$ such that the set of polygonal chains in $\Gamma$
 can be partitioned into $r$ disjoint sets, where each set corresponds to a planar polyline drawing. Let $G$ be a graph of thickness $t$.
  By F\'{a}ry's theorem, if $t=1$,
 then $G$ can be drawn on a single layer with bend complexity $0$.
 A few extensions to higher thickness are known, e.g., if $t=2$
  (resp., $t>2$), then $G$  can be drawn on
   $t$ layers with bend complexity 2
 (resp., $3n+O(1)$). However, allowing a higher number of layers
 may reduce the bend complexity, e.g., complete graphs require  $\Theta(n)$ layers to be drawn using 0 bends per edge.

In this paper we present an elegant extension  of F\'{a}ry's theorem to draw graphs of thickness $t>2$. We first prove that thickness-$t$ graphs    can be drawn on $t$ layers with $2.25n+O(1)$ bends per edge.  We then develop  another  technique to draw  thickness-$t$ graphs on $t$ layers with 
 bend complexity, i.e., $O(\sqrt{2}^{t} \cdot n^{1-(1/\beta)})$, where $\beta = 2^{\lceil (t-2)/2 \rceil }$. Previously, the bend complexity was not known to be sublinear for $t>2$.
 Finally, we  show that graphs  with linear  arboricity $k$ can be drawn on $k$ layers with bend complexity $\frac{3(k-1)n}{(4k-2)}$.    
\end{abstract}

\section{Introduction}
A polyline drawing of a graph $G=(V,E)$  in $\mathbb{R}^2$ maps each vertex of $G$
 to a distinct point, and each edge of $G$ to a polygonal chain.  	
 Many problems in VLSI layout and software visualization are tackled using
 algorithms that produce polyline drawings. For a variety of practical purposes, these
 algorithms often seek to produce drawings that  optimize
 several drawing aesthetics, e.g., minimizing the number of bends, minimizing the number of crossings, etc. 
 In this paper we examine two such parameters: \emph{bend complexity} and \emph{layer complexity}.

The \emph{thickness}  of a graph $G$ is the minimum number $\theta(G)$ such that $G$ can be decomposed  into $\theta(G)$ planar  subgraphs.  Let $\Gamma$ be a polyline drawing of $G$. Then 
 the \emph{bend complexity} of $\Gamma$ is the minimum integer $b$ such that
 each edge in $\Gamma$ has at most $b$ bends. 
 A set of edges $E' \subseteq E$ is called a \emph{crossing-free edge set in $\Gamma$}, if the corresponding polygonal chains correspond to a \emph{planar polyline drawing}, i.e., no two polylines that correspond to a pair of edges in $E'$ intersect, except possibly at their common endpoints.
 The \emph{layer complexity} of $\Gamma$ is the minimum integer $t$ such that 
 the edges of $\Gamma$ can be partitioned into $t$ crossing-free edge sets. 
 Figure~\ref{fig:intro}(a) illustrates a polyline drawing of $K_9$ on 3 layers with bend complexity 1. At first glance the layer complexity of $\Gamma$ may appear to be related to the thickness of $G$.
 However, the layer complexity is a property of the drawing $\Gamma$, while 
 thickness is a graph property. The layer complexity of $\Gamma$ can be arbitrarily large
 even when $G$ is planar, e.g., consider the case when
 $G$ is a matching and $\Gamma$ is a straight-line drawing, 
 where each edge crosses all the other edges; see Figure~\ref{fig:intro}(b). 

The layer complexity of a thickness-$t$ graph $G$ is at least $t$, and every $n$-vertex thickness-$t$
 graph admits a drawing on $t$ layers with bend complexity $O(n)$~\cite{PachW01}. 
 The problem of drawing thickness-$t$ graphs on $t$ planar layers is closely related to the \emph{simultaneous 
 embedding} problem, where given a set of planar graphs $G_1,\ldots,G_t$ on a common set of vertices, 
 the task is to compute their planar drawings $D_1,\ldots,D_t$ such that each vertex
 is mapped to the same point in the plane in each of these drawings. Figure~\ref{fig:intro}(a) can be thought as a simultaneous embedding of three given  planar graphs. 

\begin{figure}[pt]
\centering
\includegraphics[width=.95\textwidth]{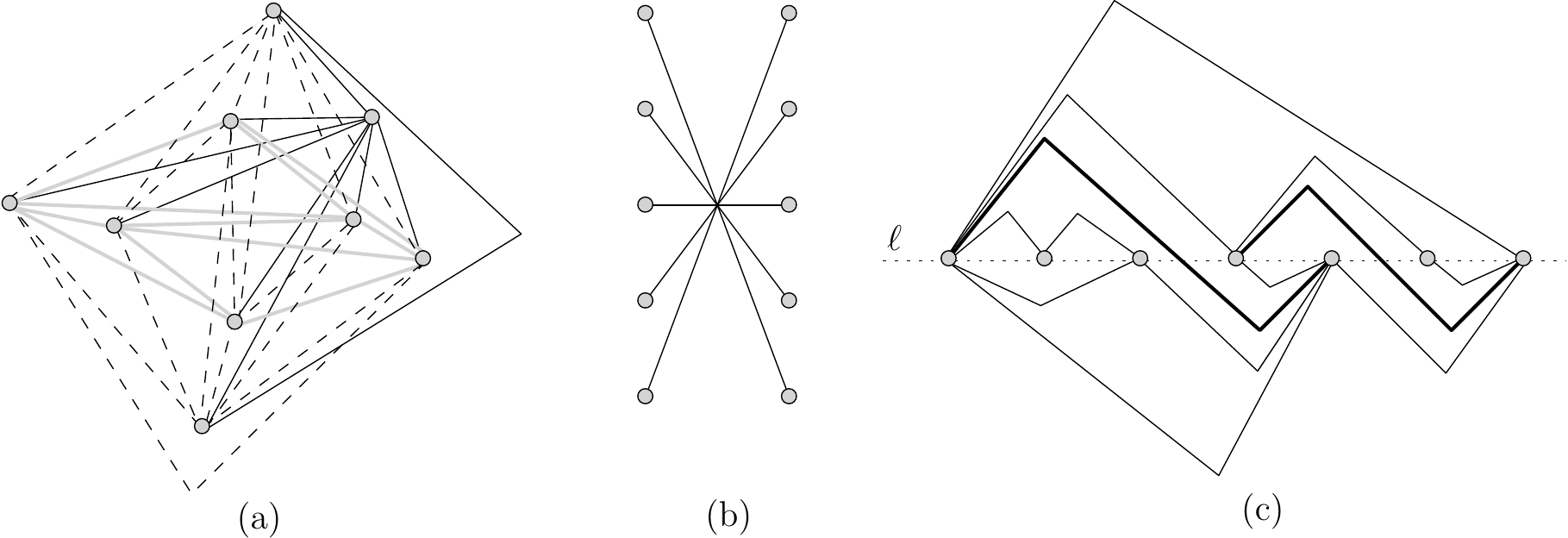}
\caption{(a) A polyline drawing of $K_9$. (b) A drawing of a matching of size  5. (c) A monotone topological book embedding of some graph. The edges that crosses the spine $\ell$ are shown in bold.}
\label{fig:intro}
\end{figure} 
 
\subsection{Related Work}

Graphs with low thickness admit polyline drawings on few layers with low bend complexity.
 If $\theta(G)=1$, then by F\'{a}ry's theorem~\cite{fary}, $G$ admits a drawing on a single layer with bend complexity $0$.
 Every pair of planar graphs can be simultaneously embedded using two bends per edge~\cite{EK05,GiacomoL07}.
 Therefore, if $\theta(G)=2$, then $G$ admits a drawing on two layers with bend complexity $2$. The best known lower bound on the bend complexity of such drawings  is one~\cite{DurocherGM13}. Duncan et al.~\cite{DuncanEK04} showed that graphs with maximum degree four can be drawn on two layers with bend complexity 0. Wood~\cite{Wood03} showed  how to construct drawings on  $O(\sqrt{m})$ layers  with bend complexity $1$, where $m$ is the number of edges in $G$.

Given an $n$-vertex planar graph $G$ and a point location for each vertex in $\mathbb{R}^2$,  Pach and Wenger~\cite{PachW01}
 showed that $G$ admits a planar polyline drawing with the given  vertex locations, where each edge has at most $120n$ bends. 
 They also showed that $\Omega(n)$ bends are sometimes necessary. 
 Badent et al.~\cite{BadentGL08} and Gordon~\cite{Gordon12} independently improved the bend complexity to $3n+O(1)$.
 Consequently, for $\theta(G)\ge 3$, these constructions can be used
 to draw $G$ on $\theta(G)$ layers with at most $3n+O(1)$ bends per edge. 

A rich body of literature~\cite{barat,HandbookSE,EnomotoM99,pachbook} examines \emph{geometric thickness}, i.e.,
 the maximum number of planar layers necessary to achieve 0 bend complexity. Dujmovi\'c and Wood~\cite{DujmovicW07} proved that  $\lceil k/2\rceil$ layers suffice 
 for graphs of treewidth $k$. Duncan~\cite{Duncan11} proved that 
 $O(\log n)$ layers suffice for graphs with arboricity two or outerthickness  two, and $O(\sqrt{n})$ layers
 suffice for  thickness-2 graphs. Dillencourt et al.~\cite{DillencourtEH00} proved that complete 
 graphs with $n$ vertices require at least  $\lceil (n/5.646)+0.342 \rceil$ and at most $\lceil    n/4 \rceil$ layers.

 
\subsection{Our Results} 
The goal of this paper is to extend our understanding of the 
 interplay between the layer complexity and bend complexity in polyline drawings.
 
 We first show that every $n$-vertex thickness-$t$ graph  admits a polyline drawing on $t$ layers  with bend complexity  $ 2.25n +O(1)$,  improving the  $3n+O(1)$ upper bound derived from~\cite{BadentGL08,Gordon12}.   
 We then give another drawing algorithm to draw thickness-$t$ graphs on $t$ layers with  bend complexity, i.e., $O(\sqrt{2}^{t} \cdot n^{1-(1/\beta)})$, where $\beta = 2^{\lceil (t-2)/2 \rceil }$. No such sublinear upper bound on the bend complexity was previously known for $t>2$. 
 Finally, we show that every $n$-vertex graph with linear arboricity $k\ge 2$ admits a polyline drawing on $k$ layers  with bend complexity $ \frac{3(k-1)n}{(4k-2)} $,   where the \emph{linear arboricity} of a graph $G$ is the minimum number of linear forests (i.e., each connected component is a path)  whose union   is $G$.

The rest of the paper is organized as follows. We start with some preliminary definitions and results (Section~\ref{sec:tech}). 
 In the subsequent section (Section~\ref{sec:algo}) we present two constructions to draw thickness $t$ graphs on $t$ layers. Section~\ref{sec:arbo} presents the results on drawing graphs of bounded arboricity. Finally, Section~\ref{sec:conclusion} concludes the paper pointing out the limitations of our results and suggesting directions for future research.

\section{Technical Details}
\label{sec:tech}
In this section we describe some preliminary definitions, and review some known results.

Let $G=(V,E)$ be a planar graph.  A \emph{monotone topological book embedding} of 
  $G$ is a planar drawing $\Gamma$ of $G$ that satisfies the 
 following properties. 
\begin{enumerate}
\item[P$_1$: ]  The vertices of $G$ lie along a horizontal line $\ell$ in  $\Gamma$. We refer to $\ell$ as the \emph{spine} of $\Gamma$.
\item[P$_2$: ]  Each edge $(u,v)\in E$ is an $x$-monotone polyline in $\Gamma$, where  
 $(u,v)$ either lies on one side of $\ell$, or crosses $\ell$ at most once.
\item[P$_3$: ]  Let $(u,v)$ be an edge that crosses $\ell$ at point $d$, where   $u$ appears before $v$ on $\ell$. Let 
 $u,\ldots,d,\ldots,v$ be the corresponding polyline.  Then 
 the polyline $u, \ldots,d$ lies above $\ell$, and 
 the polyline  $d,\ldots,v$ lies below $\ell$. 
\end{enumerate}
Figure~\ref{fig:intro}(c) illustrates a monotone topological book embedding of a planar graph.

 Let $G_1=(V,E_1)$ and $G_2=(V,E_2)$ be two graphs on a common set of vertices. 
 A \emph{simultaneous embedding} $\Gamma$ of $G_1$ and $G_2$ consists of their planar
 drawings $D_1$ and $D_2$, where each vertex is mapped to the same point in   
 the plane in both $D_1$ and $D_2$. Erten and Kobourov~\cite{EK05} showed that every
 pair of planar graphs admit a simultaneous embedding with at most three bends per edge. 
 Giacomo and Liotta~\cite{GiacomoL07} observed that by using monotone topological book embeddings  
 Erten and Kobourov's~\cite{EK05} construction can achieve a drawing with two bends per edge.
 Here we briefly recall this drawing algorithm. Without loss of generality 
 assume that both $G_1$ and $G_2$ are triangulations. Let $\pi_i$, where $1\le i \le 2$, be a vertex
 ordering that corresponds to a monotone topological book embedding of $G_i$. 
 Let $P_i$ be the corresponding \emph{spinal path}, i.e., a path that
  corresponds to $\pi_i$. Note  that some of the edges of $P_i$ may not
 exist in $G_i$, e.g., edges $(a,d)$ and $(b,c)$ in  Figures~\ref{fig:history}(a) and (b), respectively,
 and these edges of $P_i$ create  edge crossings in $G_i$.  
 Add a dummy vertex at each such edge crossing. Let $\delta_i(v)$  be the position of vertex $v$ in $\pi_i$. 
 Then $P_1$ and $P_2$ can be drawn simultaneously on an $O(n)\times O(n)$ grid~\cite{BrassCDEEIKLM07}
 by placing each vertex at the grid point $(\delta_1(v),\delta_2(v))$; see  Figure~\ref{fig:history}(c). The mapping between the dummy vertices of $P_1$ and $P_2$ can be arbitrary, here we map the dummy vertex on $(a,d)$ to the dummy vertex on $(b,c)$.   
 Finally, the edges of $G_i$ that do not belong to $P_i$ are drawn. Let $e$ be such an edge in $G_i$.
 If $e$ does not cross the spine, then it is drawn using one bend 
 on one side of $P_i$ according to the book embedding of $G_i$. Otherwise, let $q$ be a dummy vertex on the edge $e=(u,v)$, which corresponds to the intersection point of $e$ and the spine. The edges $(u,q)$ and $(v,q)$ are drawn on opposite sides of $P_i$ 
 such that the polyline from $u$ to $v$ do not create any bend at $q$. 
 Since each of $(u,q)$ and $(v,q)$  contains only one bend, $e$ contains only two bends.
 Finally, the edges of $P_i$  that do not belong to $G_i$ are removed from the drawing;
 see  Figure~\ref{fig:history}(d).

\begin{figure}[h]
\centering
\includegraphics[width=\textwidth]{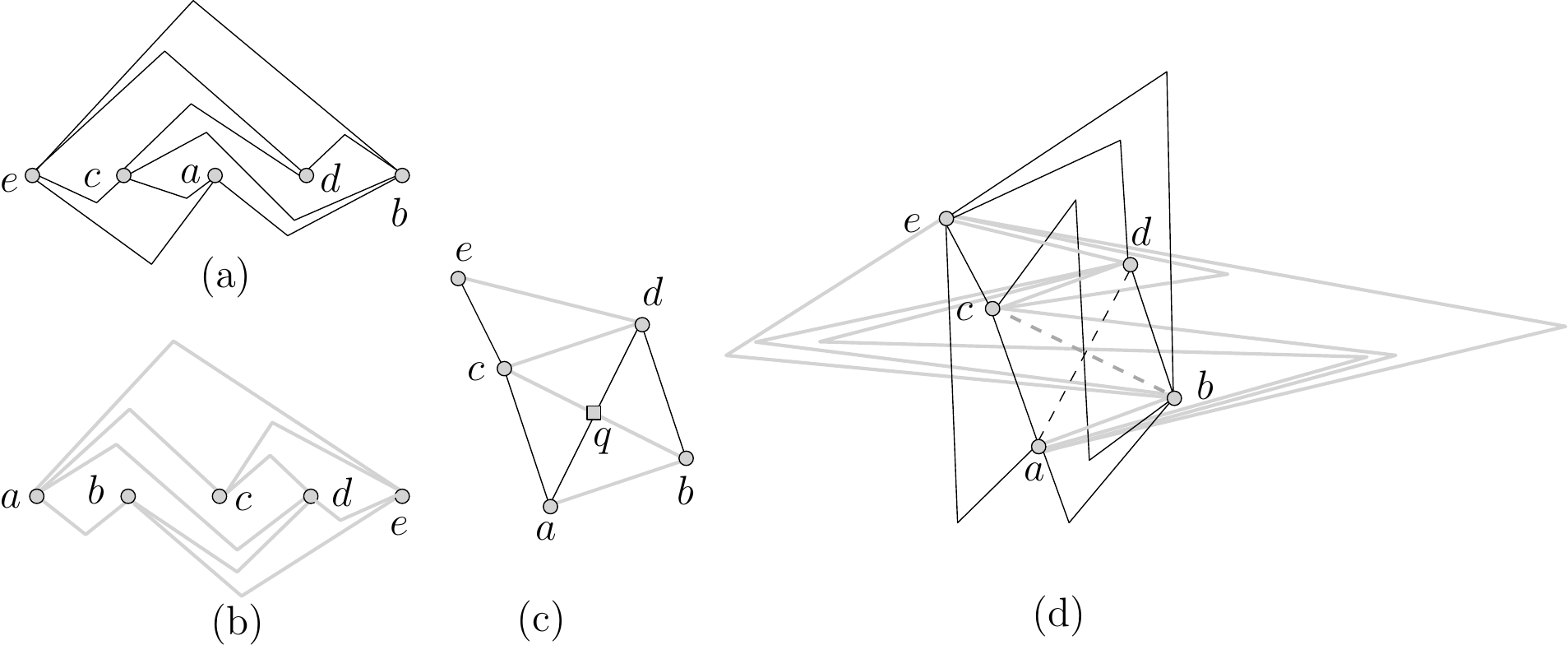}
\caption{(a)--(b) Monotone topological book embeddings of $G_1$ and $G_2$. 
 (c)--(d) Simultaneous embedding of $G_1$ and $G_2$, where the deleted edges are shown in dashed lines. }
\label{fig:history}
\end{figure} 

Let $\Gamma$ be a planar polyline drawing of a path $P=\{v_1,v_2,\ldots,v_n\}$. We call $\Gamma$ an \emph{uphill} drawing
 if for any point $q$ on $\Gamma$, the upward ray from $q$ does not intersect
 the path $v_1,\ldots,q$. Note that $q$ may be a vertex location or an interior point of some edge in $\Gamma$.
  Let $a$ and $b$ be two points in $\mathbb{R}^2$. 
  Then
 $a$ and $b$ are \emph{$r$-visible} to each other if and only if their exists a polygonal chain of length $r$ with end points
 $a,b$ that does not intersect $\Gamma$ at any point except possibly at $a,b$.
 A point \emph{$p$ lies between two other points $a,b$}, if either the inequality $x(a)<x(p)<x(b)$ or $x(b)<x(p)<x(a)$ holds.

 A set of points is \emph{monotone} if the 
 polyline connecting them from left to right is monotone
 with respect to  $y$-axis.  
 Let $S$ be a set of $n$ points in general position. 
 By the Erd\"os-Szekeres theorem~\cite{classic},  $S$ can be partitioned into $O(\sqrt{n})$ disjoint monotone subsets, and such a partition can be computed in $O(n^{1.5})$ time~\cite{Bar-YehudaF98}.

\section{Drawing Thickness-$\boldsymbol{t}$ Graphs on $\boldsymbol{t}$ Layers}
\label{sec:algo}
In this section we give two separate construction techniques to draw thickness-$t$ graphs on $t$ layers.  
 We first present  a construction  achieving   $2.25n +O(1)$ upper bound (Section~\ref{sec:tist}), which is  simple  and  intuitive.  
 Although the technique is simple, the idea of the construction will  be used frequently in the rest of the paper. Therefore, we explained the construction in reasonable details.  
 
 Later, we present a second  construction    (Section~\ref{sec:involved}), which is more involved, and relies on a deep understanding of the geometry of point sets. In this case, the upper bound on the bend complexity will depend  on some generalization of Erd\"os-Szekeres theorem~\cite{classic}, e.g., partitioning a point set into monotone subsequences in higher dimensions (Section~\ref{sec:tislarge}). 

\subsection{A Simple Construction with Bend Complexity $\boldsymbol{2.25n+O(1)}$}  
\label{sec:tist}
Let $G_1,\ldots,G_t$ be the planar subgraphs of the input graph  $G$,
 and let $S$ be an ordered set of $n$ points on a semicircular arc. 
 Let $V=\{v_1,v_2,\ldots,v_n\}$ be the set of vertices of $G$. We show that each  
 $G_i$, where $1\le i\le t$, admits a polyline drawing  with bend
 complexity $2.25n+O(1)$ such that  vertex $v_j$ is mapped to the $j$th point of $S$.
 To draw $G_i$, we will use the vertex ordering of its monotone topological book embedding.
 The following lemma will be useful to draw the spinal path $P_i$ of $G_i$.  

\begin{lemma}\label{lem:simple}
Let $S=\{p_0,p_1,\ldots,p_{n+1}\}$ be a set  of 
 points lying on an $x$-monotone semicircular arc (e.g., see Figure~\ref{fig:circ}(a)), and let 
 $P=\{v_1,v_2,\ldots,v_n\}$ be a path of $n$ vertices. 
 Assume that $p_0$ and $p_{n+1}$ are the leftmost and rightmost points of $S$, respectively, 
 and the points $p_1,\ldots,p_n$ are equally spaced between them in some arbitrary 
 order. Then $P$ admits an uphill drawing $\Gamma$ with the vertex $v_i$ 
 assigned to $p_i$, where $1\le i\le n$, and every point $p_i$ satisfies the following properties:
\begin{enumerate}
 \item[A.] Both the points $p_0$ and $p_{n+1}$ are  $(3n/4)$-visible to $p_i$.  
 \item[B.] One can draw an $x$-monotone polygonal chain from $p_0$ to $p_{n+1}$ with $3n/4$ bends that 
 intersects $\Gamma$ only at $p_i$.  
 \end{enumerate}
\end{lemma}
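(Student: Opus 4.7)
The plan is to construct $\Gamma$ by drawing each edge $e_i = (v_i, v_{i+1})$ of $P$ as an $x$-monotone polyline lying strictly above the semicircular arc, using distinct altitudes $h_1 < h_2 < \cdots < h_{n-1}$ assigned to the edges in path-index order. First, I place each $v_i$ at its assigned point $p_i$. Then, processing edges in the order $e_1, e_2, \ldots, e_{n-1}$, I route $e_i$ as an $x$-monotone polyline from $p_i$ to $p_{i+1}$ that passes strictly above all previously drawn edges and above every intermediate vertex $p_j$ whose column lies between those of $p_i$ and $p_{i+1}$, using bends as needed. Because each edge is $x$-monotone, no column is revisited within an edge, so uphill within an edge is automatic; across edges, the strict altitude ordering guarantees that at every shared column the later edge lies above the earlier one, which establishes the uphill property.

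For property A, since $\Gamma$ lies strictly above the arc, the straight-line chord from $p_0$ to $p_i$ — which by concavity of the arc lies strictly below the arc between its two endpoints — does not intersect $\Gamma$ except at $p_i$ itself. This one-segment polyline trivially witnesses $(3n/4)$-visibility of $p_0$ to $p_i$, and the same argument handles $p_{n+1}$. For property B, I concatenate the chords $p_0 \to p_i$ and $p_i \to p_{n+1}$: each lies below the arc, the concatenation is $x$-monotone (both segments move rightward since $x(p_0) < x(p_i) < x(p_{n+1})$), it passes through $p_i$, and it meets $\Gamma$ only at $p_i$. This uses a single bend at $p_i$, well inside the $3n/4$ budget.

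The hard part will be the planar construction of $\Gamma$ itself. When the permutation mapping path-indices to $x$-ranks is adversarial, a new edge $e_i$ must be routed $x$-monotonically above every previously drawn edge whose span overlaps that of $e_i$, and detouring around these prior layers may force $\Omega(n)$ bends on $e_i$ in the worst case. Verifying that a valid such routing always exists requires a careful induction on $i$: at step $i$, one chooses $h_i$ larger than the maximum altitude reached by any earlier edge in the $x$-span of $e_i$, and inserts local bends to attach $e_i$ to $p_i$ and $p_{i+1}$ without revisiting a column or crossing an earlier edge. The lemma places no bound on the bend complexity of the edges of $\Gamma$ itself, so this is acceptable. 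Under this construction the $(3n/4)$ visibility bound is in fact very loose; the generous bound is presumably calibrated to the downstream use of the lemma in the $2.25n+O(1)$ construction, where several visibility chains must coexist and avoid each other as well as $\Gamma$, and where routing constraints beyond plain visibility will consume most of the bend budget.
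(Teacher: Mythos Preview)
Your construction of $\Gamma$ cannot satisfy the uphill property as claimed. Take the simplest bad case: three arc points $q_1,q_2,q_3$ from left to right, with $p_1=q_1$, $p_2=q_3$, $p_3=q_2$. Edge $e_1=(v_1,v_2)$ runs from $q_1$ to $q_3$; if it is $x$-monotone and lies strictly above the arc, then at the column $x(q_2)$ it sits strictly above $q_2=p_3$. Now the upward ray from $v_3=p_3$ hits $e_1$, which belongs to the prefix $v_1,\ldots,v_3$, so the uphill condition fails at $v_3$. More generally, whenever an earlier edge $e_k$ has a later vertex $p_j$ in its $x$-span, any routing of $e_k$ that stays strictly above the arc forces $e_k$ above $p_j$, violating uphill at $p_j$. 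The fix is exactly what the paper does: route $e_k$ \emph{below} the not-yet-visited intermediate points (via anchors just above the visited ones), so that later vertices are never trapped under earlier edges. But once edges dip below the arc, your chord argument for Properties~A and~B collapses: the chord $p_0p_i$ also lies below the arc and can now cross $\Gamma$.

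This is also why the $3n/4$ in Properties~A and~B is not slack. In any uphill drawing, the visibility curve from $p_0$ to $p_i$ has to thread between the already-placed prefix (below) and the yet-to-be-placed suffix (above), so it is forced to weave through the point set much like an edge of $\Gamma$ itself. The paper obtains the $3n/4$ bound by a separate bend-reduction step (collapsing each maximal run of consecutive bends to three), and this same bound on the edges of $\Gamma$ is needed downstream: the spinal path $P_i$ is drawn as $\Gamma_i$ in Theorem~\ref{thm:general}, and some of its edges are genuine edges of $G_i$, so an unbounded-bend $\Gamma$ would not yield the $2.25n+O(1)$ result even if the lemma statement does not explicitly constrain it.
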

\begin{proof}
 We prove the lemma by constructing such a drawing $\Gamma$ for $P$. 
  The construction assigns a polyline for each edge of $P$. The resulting drawing
  may contain edge overlaps, and the bend complexity could be as large as $n-2$. 
  Later we remove these degeneracies and reduce the bend complexity to obtain $\Gamma$.

\subparagraph*{Drawings of Edges:} For each point $p_i\in S$, where $1\le i\le n$, we create
 an \emph{anchor point} $p'_i$ at $(x(p_i),y(p_i)+\epsilon)$, where 
 $\epsilon>0$. We choose $\epsilon$ small enough such that
 for any $j$, where $1 \le i\not=j \le n$, all the points of $S$   between $p_i$ and $p_j$ lie above $(p'_i,p'_j)$. Figure~\ref{fig:circ}(a) illustrates this property for 
 the anchor point $p'_1$.

 We first draw the edge $(v_1,v_2)$ using a straight line segment. 
 For each $j$ from $2$ to $n-1$, we now draw the edges $(v_j,v_{j+1})$ one after another.
 Assume without loss of generality that $x(p_j) < x(p_{j+1})$. 
 We call a point $p\in S$ between $p_j$ and $p_{j+1}$ 
 a \emph{visited point} if the corresponding vertex $v $ appears in $v_1,\ldots,v_j$, i.e.,  $v $ has already been placed at $p $. We draw an $x$-monotone polygonal   chain $L$ that starts at $v_j$, 
  connects the anchors of the intermediate visited points from left to right,
 and ends at $v_{j+1}$. Figure~\ref{fig:circ}(b) illustrates such a construction. 

 Since the number of bends on $L$ is equal to the number of 
 visited points of $S$ between $p_j$ and $p_{j+1}$, 
 each edge contains at most $\alpha$ bends, where $\alpha$
  is the number of points of $S$ between $p_j$ and $p_{j+1}$.
 
\begin{figure}[h]
\centering
\includegraphics[width=\textwidth]{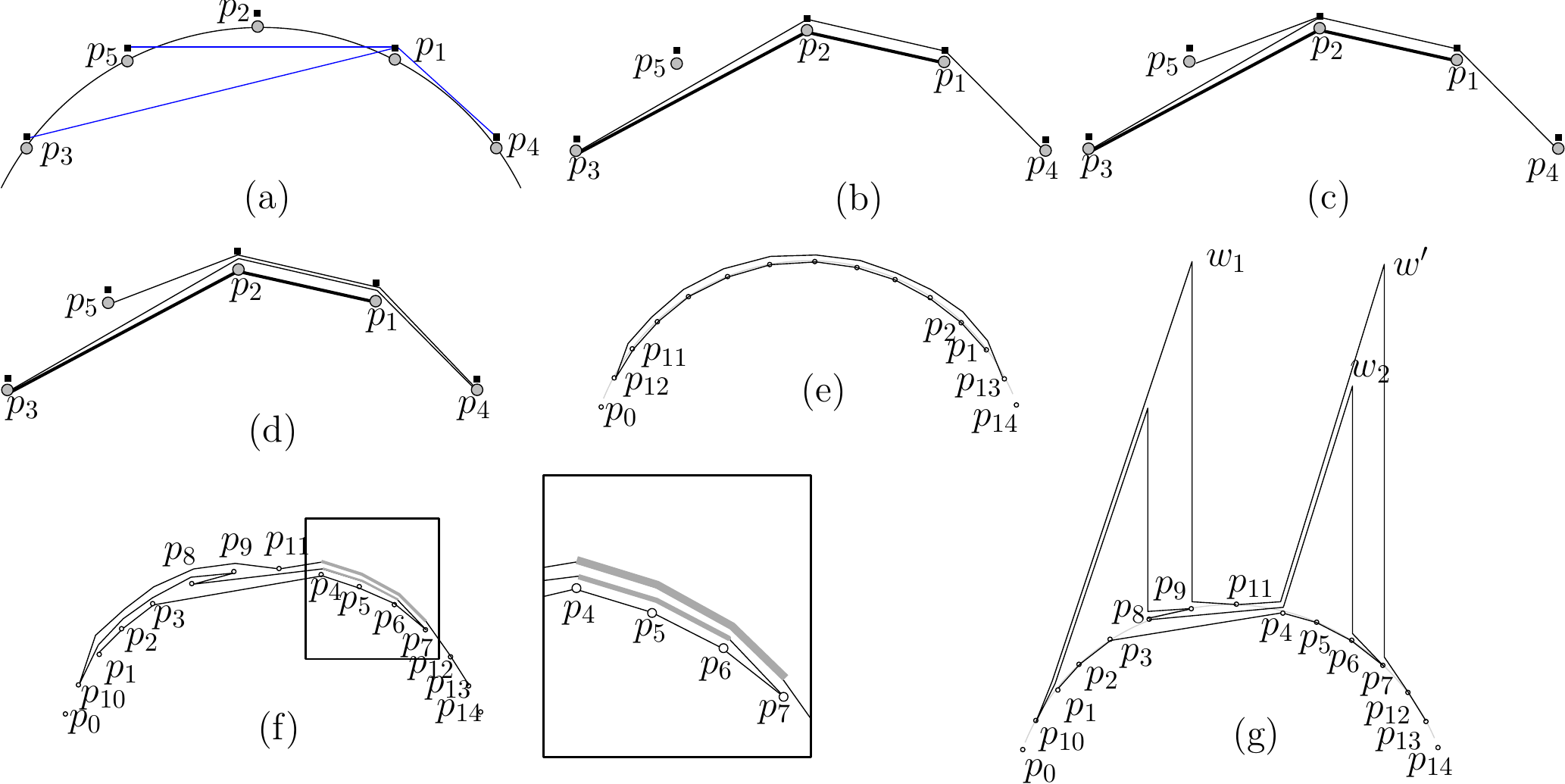}
\caption{Illustration for the proof of Lemma~\ref{lem:simple}. Anchor points are shown in black squares. For a larger view of this figure, see Appendix A.}
\label{fig:circ}
\end{figure} 

\textbf{Removing Degeneracies: }The drawing $D_n$ of the path $P$ constructed above contains edge 
 overlaps, e.g., see the edges $(v_3,v_4)$ and $(v_4,v_5)$ in Figure~\ref{fig:circ}(c). 
 To remove the degeneracies, for each $i$, we spread the corresponding bend points between
 $p_i$ and $p'_i$, in the order they appear on the path, see Figure~\ref{fig:circ}(d).
 Consequently, we obtain a planar drawing of $P$.   Let the resulting drawing be $D'_n$. Since each edge $(p_j,p_{j+1})$ is drawn as an $x$-monotone polyline above the path $p_1,\ldots,p_j$,  $D'_n$ satisfies the uphill property.  Note that $D'_n$ may have bend complexity $n-2$, e.g., see Figure~\ref{fig:circ}(e).
 We now show how to reduce the bend complexity and satisfy Properties A--B.
 
\textbf{Reducing Bend Complexity:} A pair of points in $S$ are \emph{consecutive} if they do not contain any other point of $S$
 in between. Let $e$ be any edge of $P$. Let $C_e$ be the corresponding
 polygonal chain in $D'_n$. A pair of bends on $C_e$ are called \emph{consecutive bends} if 
 their corresponding points in $S$ are also consecutive. 
 A  \emph{bend-interval} of $C_e$ is a maximal sequence of consecutive bends in $C_e$.  
 Note that we can partition the bends on $e$ into disjoint sets of bend-intervals. 
 
For any bend-interval $s$, let $l(s)$ and $r(s)$ be the $x$-coordinates of the left and 
 right endpoints of $s$, respectively.  
 Let $s_1$ and $s_2$ be two bend-intervals lying on two distinct edges $e_1$ and $e_2$ in $D'_n$, respectively,
 where $e_2$ appears after $e_1$ in $P$. 
 We   claim that    the  intervals $[l(s_1),r(s_1)]$ and $[l(s_2),r(s_2)]$ 
 are either disjoint, or $[l(s_1),r(s_1)] \subseteq [l(s_2),r(s_2)]$.
 We  refer to this property as the \emph{balanced parenthesis property of the bend-intervals}.  To verify this property assume that for some $s_1,s_2$, we have  $[l(s_1),r(s_1)] \cap [l(s_2),r(s_2)] \not= \phi$. Since  $s_2$ is a maximal sequence of consecutive bends, the inequalities $l(s_2)\le l(s_1)$ and  $r(s_2)\ge r(s_1)$ hold, i.e.,  $[l(s_1),r(s_1)] \subseteq [l(s_2),r(s_2)]$. We say that \emph{$s_1$ is nested by $s_2$}. Figure~\ref{fig:circ}(f) illustrates such a scenario, where $s_1,s_2$ are shown in thin and thick gray lines, respectively.
 
We now consider the edges of $P$ in reverse order, i.e., for each $j$ from 
 $n$ to 2, we modify the drawing of $e=(v_j,v_{j-1})$. For each bend-interval  $s=(b_1,b_2\ldots,b_r)$ of $C_e$, if $s$ has three or more bends, then we delete the bends $b_2,\ldots, b_{r-1}$, and join $b_1$ and $b_r$   using a new   bend point $w$. To create $w$, we consider the two cases of the balanced parenthesis property.  
 
 If $s$ is not nested by any other bend-interval in $D'_n$, then we place
 $w$ high enough above $b_r$ such that the chain $b_1,w,b_r$ does not introduce  any edge crossing, e.g., see the point $w_1(=w)$ in Figure~\ref{fig:circ}(g). On the other hand, if $s$ is nested by some other bend-interval,  then let $s'$ be such a bend-interval immediately
 above $s$. Since $s'=(b'_1,b'_2,\ldots,b'_r)$ is already processed, it must have been replaced by  some chain  $b'_1,w',b'_r$. Therefore, we can find a location for $b$ inside $\angle b'_1w'b'_r$
 such that the chain $b_1,w,b_r$ does not introduce  any edge crossing,   e.g., see the points $w'$ and $w_2(=w)$ in Figure~\ref{fig:circ}(g). 
 Let the resulting drawing of $P$ be $\Gamma$. 
 
We now show that the above modification reduces the bend complexity to $3n/4$. Let $e$ be an edge 
 of $P$ that contains $\alpha$ points from $S$ between its endpoints.
 Let $C_e$ be the corresponding polygonal chain in $D'_n$. Recall that 
 any bend-interval of length $\ell$ in $C_e$ contributes to $\min\{\ell ,3\}$ bends on $e$ in
 $\Gamma$. Therefore, if there are at most $\alpha /4$ bend-intervals on $C_e$,
 then $e$ can have at most $3\alpha /4$ bends in $\Gamma$.
 Otherwise, if there are more than $\alpha /4$ bend-intervals, then there are at least 
 $\alpha /4$ points\footnote{Every pair of consecutive bend-intervals contain such a point in between.} of $S$ that do not contribute to bends on $C_e$. Therefore, in both cases, 
 $C_e$ can have at most $3\alpha /4$  bends in $\Gamma$.

\textbf{Satisfying Properties A--B: }  Let $p_i$ be any point of $S\setminus \{p_0,p_{n+1}\}$. We first show that $p_0$  is $(3n/4)$-visible to $p_i$. Let $D_i$, where $1\le i\le n$,  be the drawing of the path $v_1,v_2,\ldots,v_i$. Observe that one can insert an edge 
 $(p_0,p_i)$ using an $x$-monotone polyline $L$ such that the bends on $L$ correspond to 
 the intermediate visited points. Now the drawing of the rest of the 
 path $v_i,v_{i+1}, \ldots, v_n$ can be continued such that it does not cross $L$.
 Therefore, if the number of points of $S$ between $p_0$ and $p_i$ is $\alpha$, then
 $L$ has at most $\alpha$ bends. Finally, the process of reducing bend complexity improves the number  of bends on $L$ to $3\alpha/4$. 
 
 Similarly, we can observe that $p_{n+1}$ is at most 
 $3\alpha'/4$ visible to $p_i$, where $\alpha'$ is the number of points of $S$ between $p_i$ and $p_{n+1}$.  Since the edges $(p_0,p_i)$ and $(p_i,p_{n+1})$ are $x$-monotone, we can draw an 
 $x$-monotone polygonal chain from $p_0$ to $p_{n+1}$ with at most $3(\alpha+\alpha')/4\le (3n/4)$ bends that  intersects $\Gamma$ only at $p_i$.
\end{proof}

\begin{theorem}
\label{thm:general}
Every $n$-vertex graph  of $t$ admits a drawing on $t$ layers with bend complexity $2.25n+O(1)$.
\end{theorem}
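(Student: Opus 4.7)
The plan is to fix $n+2$ points $p_0,p_1,\dots,p_{n+1}$ on a common semicircular arc, with $p_0,p_{n+1}$ as outer anchors and $p_1,\dots,p_n$ as the shared vertex positions for all $t$ layers. Decompose $G$ into planar subgraphs $G_1,\dots,G_t$; for each $G_i$, augment to a triangulation using dummy edges (to be removed at the end) and compute a monotone topological book embedding that yields a vertex ordering $\pi_i$ and a spinal path $P_i$. Each layer will be handled independently, so it suffices to produce a planar polyline drawing of a single $G_i$ on the fixed vertex positions with at most $2.25n+O(1)$ bends per edge.

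For a fixed layer $i$, I would first apply Lemma~\ref{lem:simple} to $P_i$, with vertex $\pi_i(k)$ assigned to $p_k$. This produces an uphill drawing of $P_i$ with at most $3n/4$ bends per spinal edge and, crucially, Properties A and B: every vertex is $(3n/4)$-visible to each anchor, and for any prescribed vertex $v$ there is an $x$-monotone $(3n/4)$-bend chain from $p_0$ to $p_{n+1}$ meeting $P_i$ only at $v$. I would then insert the non-spinal edges of $G_i$. An edge lying on one side of the spine in the book embedding is routed $u \to p_0 \to v$ (or through $p_{n+1}$) by concatenating two Property-A chains, using at most $3n/2 + O(1)$ bends. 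An edge that crosses the spine at a dummy $d$ is split at $d$: the actual crossing of $P_i$ is realized by the Property-B chain applied at the spinal vertex closest to $d$, while each half-edge travels out to an anchor via Property A. This contributes three $(3n/4)$-bend pieces for a total of $9n/4+O(1)=2.25n+O(1)$ bends. Since spinal edges already carry at most $3n/4$ bends, this dominates, and the final drawing for $G_i$ meets the claimed bound once the dummy triangulation edges are discarded.

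The main obstacle is ensuring that the non-spinal routings within a single layer are mutually non-crossing, because many of them share $p_0$ and $p_{n+1}$ as waypoints. I would insert them in the nesting order inherited from the book embedding: arches that stay above the spine are drawn outermost-first above $P_i$, arches below the spine are treated symmetrically, and spine-crossing edges are ordered by the left-to-right position of the passage vertex furnished by Property B. Because each Property-A chain is $x$-monotone and hugs $P_i$ from the outside, while Property B allows at most one edge at a time to cross $P_i$ at any specific vertex, nested arches and crossings can be laid down without collision. The combinatorial crux of the proof is checking that these three routing patterns compose consistently with the book-embedding structure so that the result is a genuinely planar polyline drawing of $G_i$ on the prescribed vertex positions.
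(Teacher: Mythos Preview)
Your proposal is correct and follows essentially the same route as the paper: fix shared points on a semicircular arc, draw each spinal path $P_i$ via Lemma~\ref{lem:simple}, route one-sided non-spinal edges through a single anchor ($p_0$ or $p_{n+1}$) using two Property-A chains for at most $1.5n$ bends, and route a spine-crossing edge as $u\to p_0\to w\to p_{n+1}\to v$ (two Property-A chains plus one Property-B chain through the spinal endpoint $w$ of the crossed spine edge) for the $2.25n+O(1)$ bound. The paper is equally terse about the mutual non-crossing issue you identify as the ``combinatorial crux'': it simply asserts that the degeneracy at $w$ and the edge overlaps at $p_0,p_{n+1}$ can each be resolved with $O(1)$ additional bends per edge, without spelling out the nesting argument you sketch.
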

\begin{proof}
Let $G_1,\ldots,G_t$ be the planar subgraphs of the input graph  $G$,
 and let $V=\{v_1,v_2,\ldots,v_n\}$ be the set of vertices of $G$.
 let $S=\{p_0,p_1,\ldots,p_{n+1}\}$ be a set of $n+2$  
 points lying on a semicircular arc as defined in Lemma~\ref{lem:simple}. 
  Let $P_i$ be spinal path of the 
  monotone topological book embedding of $G_i$, where $1\le i\le t$.
  We first compute an uphill   drawing $\Gamma_i$ of the path $P_i$.
 We then draw the edges of $G_i$ that do not belong to $P_i$.
 Let $e=(u,v)$ be such an edge,  and without loss of generality assume that $u$ appears to the left of $v$ on the spine.
 
 If $e$ lies above (resp., below) the spine, then 
 we draw two $x$-monotone polygonal chains; one  from $u$ to $p_0$   (resp., $p_{n+1}$),
 and the other from $v$ to  $p_0$ (resp., $p_{n+1}$). By Lemma~\ref{lem:simple}, these polygonal chains 
 do not intersect $\Gamma_i$ except at $u$ and $v$, and  each contains at most $3n/4$ bends. Hence $e$ contains at most $1.5n$ bends in total.
  
 If $e$ crosses the spine, then it crosses some edge $(w,w')$ of $P_i$.
 Draw the edges $(u,w)$ and $(w,v)$ using the polylines $u,\ldots,p_0,\ldots,w$
 and   $w,\ldots,p_{n+1},\ldots,v$, respectively. The polylines $u,\ldots,p_0$
 and $p_{n+1},\ldots,v$ are $x$-monotone, and have at most $3n/4$ bends each. 
 The polyline $C=(p_0, \ldots,w\ldots, p_{n+1})$ is also $x$-monotone and 
 has at most $3n/4$ bends. Hence the number of bends is $2.25n$ in total. 
 It is straightforward to avoid the degeneracy at $w$, 
 by adding a constant number of bends on $C$.  
 
 Note that we still have some edge overlaps at $p_0$ and $p_{n+1}$. It is straightforward to remove
 these degeneracies by adding only a constant number of more bends   per edge.  
\end{proof}

\subsection{A Construction for Small Values of $\boldsymbol{t}$}
\label{sec:involved}
In this section we give another construction to draw thickness-$t$ graphs on $t$ layers. We first show that every thickness-$t$ graph, where $t\in\{3,4\}$,  can be drawn on $t$ layers with bend complexity $O(\sqrt{n})$, and then show how to extend the technique for larger values of $t$.
\subsubsection{Construction when $\boldsymbol{t=3}$}
\label{sec:tis3}

Let $S$ be an ordered set of $n$ points, where the ordering is by increasing $x$-coordinate.
 A \emph{$(k,n)$-group} $S_{k,n}$  is a partition of $S$
 into $k$ disjoint ordered subsets $\{S_1,\ldots,S_k\}$, each containing
 contiguous points from $S$. Label the points of $S$ using a permutation
 of $p_1,p_2,\ldots,p_n$ such that for each set $S'\in S_{k,n}$, 
 the indices of the points in $S'$ are either increasing or decreasing.
 If the indices are increasing (resp., decreasing), then we refer $S'$
 as a rightward (resp., leftward) set. We will refer to such a labelling as a
 \emph{smart labelling} of $S_{k,n}$.  
  Figure~\ref{fig:snake} illustrates a $(5,23)$-group and a smart labelling of the 
  underlying point set $S_{5,23}$.
 
 Note that for any $i$, where $1\le i\le n$, deletion of the points
 $p_1,\ldots,p_i$ removes the points of the rightward (resp., leftward) sets from their left (resp., right).
 The \emph{necklace} of $S_{k,n}$ is a path obtained from a smart labelling of $S_{k,n}$ by connecting the points $p_i,p_{i+1}$, where $1\le i\le n-1$.
 The following lemma constructs an uphill drawing of the necklace using $O(k)$ bends per edge.

\begin{figure}[pt]
\centering
\includegraphics[width=\textwidth]{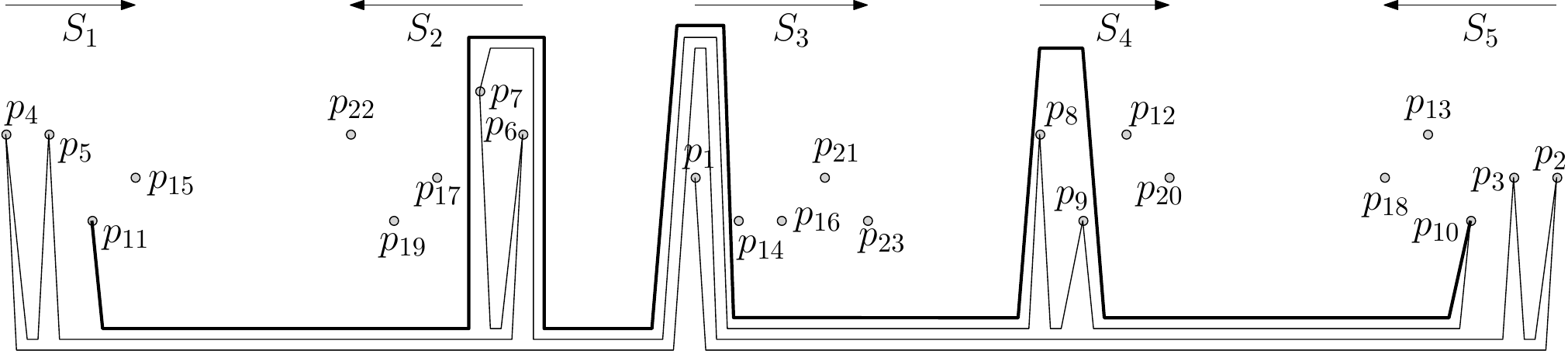}
\caption{Illustration for the proof of Lemma~\ref{lem:snake}. The edge $(p_{10},p_{11})$
 is shown in bold. Passing through each intermediate set requires at most 4 bends. }
\label{fig:snake}
\end{figure}

\begin{lemma}
\label{lem:snake}
 Let $S$ be a set of $n$ points ordered by increasing $x$-coordinate, and let $S_{k,n} = \{S_1,\ldots,S_k\}$  be 
 a $(k,n)$-group of $S$.  Label $S_{k,n}$ with a smart labelling. Then the necklace 
 of $S_{k,n}$ admits an uphill drawing with $O(k)$ bends per edge.
\end{lemma}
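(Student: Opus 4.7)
The plan is to mimic the incremental construction of Lemma~\ref{lem:simple}, but to exploit the $(k,n)$-group structure so that each necklace edge incurs only a constant number of bends per set it traverses, rather than one per visited point along its $x$-extent. I would place the points of $S$ on an $x$-monotone semicircular arc consistent with the input ordering, place an anchor point just above each $p_i$ as in the previous lemma, and for each set $S_j$ reserve a horizontal ``corridor'' above its $x$-extent through which any necklace edge passing over $S_j$ is routed.

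To draw the necklace, I would process its edges in order $(p_1,p_2),(p_2,p_3),\ldots,(p_{n-1},p_n)$. When drawing $(p_i,p_{i+1})$, let $S_a$ and $S_b$ be the sets containing $p_i$ and $p_{i+1}$, and let $S_{c_1},\ldots,S_{c_m}$ be the intermediate sets whose $x$-extents lie between them. If $a=b$, then by the smart-labelling property $p_i$ and $p_{i+1}$ are $x$-consecutive within $S_a$, so a short polyline with $O(1)$ bends suffices. Otherwise, the edge ascends from $p_i$ into the corridor of $S_a$, traverses to the boundary of $S_a$, enters the corridor of each $S_{c_\ell}$ in turn using a constant number of bends to clear previously drawn edges, exits at the opposite boundary, and finally descends from the corridor of $S_b$ to $p_{i+1}$. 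Since at most $k$ corridors are visited per edge and each contributes $O(1)$ bends, the total is $O(k)$ as required.

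The main obstacle is ensuring planarity and the uphill property while keeping the per-corridor overhead a constant rather than growing with the number of earlier traversals. The key observation I would rely on is that within any single corridor, the $x$-intervals of successive necklace edges crossing that corridor obey a balanced-parenthesis nesting analogous to the one identified in Lemma~\ref{lem:simple}: a later crossing is either $x$-disjoint from or strictly contains each earlier crossing. This nesting lets a new edge clear all previously drawn edges in a corridor with a fixed constant number of additional bends --- rising just above the outermost previously drawn chain, traversing horizontally, and descending --- rather than one bend per earlier crossing. Choosing the height of the new edge's corridor segment to strictly exceed every previously drawn portion in the overlapping $x$-extent then simultaneously preserves the uphill property and avoids crossings outside shared endpoints, completing the construction.
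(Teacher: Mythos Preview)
Your incremental plan --- process necklace edges in order, route each new edge above everything drawn so far, and charge $O(1)$ bends per intermediate set --- is exactly the paper's approach. Where you diverge is in the justification for the $O(1)$-per-set bound, and that is where the gap lies.

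You invoke a balanced-parenthesis property: within a fixed set $S_m$, the $x$-intervals of necklace edges crossing $S_m$ are claimed to satisfy ``later is disjoint from or strictly contains earlier.'' This is false. Take $S_m$ rightward with indices $2,5,8,11$ from left to right. Suppose $p_3$ lies to the right of $S_m$ and $p_4$ to its left. Then the edge $(p_2,p_3)$ has $x$-interval $[x(p_2),\text{right}]$ inside $S_m$, while the later edge $(p_4,p_5)$ has $x$-interval $[\text{left},x(p_5)]$; these overlap but neither contains the other. More simply, a full crossing $(p_i,p_{i+1})$ followed by a later partial crossing with one endpoint inside $S_m$ gives a later interval strictly contained in the earlier one --- the opposite of what you assert. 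Without the nesting, your ``rise above the outermost chain and descend'' step can force the descending segment to cross an earlier corridor segment, so planarity is not secured.

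The fix is the observation you already used for the same-set case but did not carry over to intermediate sets: the smart labelling guarantees that, at the moment you draw $(p_i,p_{i+1})$, the already-visited points of any intermediate $S_m$ form a contiguous block at one end of $S_m$ (the left end if $S_m$ is rightward, the right end if leftward). Hence the new polyline must be high over one contiguous block (the visited points and everything drawn through them so far) and low over the complementary contiguous block (the unvisited points), which is a single step --- $O(1)$ bends --- per intermediate set. No parenthesis structure is needed; the contiguity of visited points does all the work. This is precisely the paper's argument.
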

\begin{proof}
We construct this uphill drawing incrementally in a similar way as 
 in the proof of Lemma~\ref{lem:simple}. Let $D_j$, where $1\le j\le n$,
 be the drawing of the path $p_1,\ldots,p_j$. At each step of the construction, we maintain the 
 invariant that   $D_j$ is an uphill drawing. 
 
We first assign $v_1$ to $p_1$. Then for each $i$ from $1$ to $n-1$, we 
 draw the edge $(p_i,p_{i+1})$ using an $x$-monotone polyline $L$ that lies
 above $D_i$ and below the points $p_{j'}$, where $j'>i+1$. 
 Figure~\ref{fig:snake} illustrates such a drawing of $(p_i,p_{i+1})$.
 
The crux of the construction is that one can draw such a polyline $L$
 using at most $O(k)$ bends. Assume that $p_i$ and $p_{i+1}$
 belong to the sets $S_l\in S_{k,n}$ and $S_r\in S_{k,n}$, respectively.
 If $S_l$ and $S_r$ are identical, then $p_i$ and $p_{i+1}$ are consecutive,
 and hence it suffices to use at most $O(1)$ bends to draw $L$. On the other hand,
 if  $S_l$ and $S_r$ are distinct, then there can be at most $k-2$ sets of 
 $S_{k,n}$ between them. Let $S_m$ be such a set. While passing through $S_m$, we need to keep the points 
 that already belong to the path, below $L$, and the rest of the points above $L$.
 By the property of smart labelling, the points that belong to $D_i$
 are consecutive in $S_m$, and lie to the left or right side of $S_m$ depending on
 whether $S_m$ is rightward or leftward. Therefore, we need only $O(1)$
 bends to pass through $S_m$. Since there are at  most $k-2$ sets between 
 $S_l$ and $S_r$,  $O(k)$ bends suffice to construct $L$.  
\end{proof} 

We are now ready to describe the main construction. 
 Let $G$ be an $n$-vertex thickness-3 graph, and let $G_1,G_2,G_3$ be the planar subgraphs of $G$. 
 Let $P_i$ be the spinal path of the monotone topological
 book embedding of $G_i$, where $1\le i\le 3$.
 We first create a set  of $n$ points and assign them to the vertices of $G$.
 Later we route the edges of $G$.

\textbf{Creating Vertex Locations: } Assume without loss of generality that $P_1=(v_1,\ldots,v_n)$. 
 For each $i$ from $1$ to $n$, we place a point at $(i,j)$ in the plane, where  $j$ is 
 the position of $v_j$ in $P_2$. Let the resulting point set be $Q$.
 Recall that $Q$ can be partitioned into  disjoint monotone subsets $Q_1,\ldots,Q_k$,
 where $k\le O(\sqrt{n})$~\cite{Bar-YehudaF98}.
 Figure~\ref{fig:p}(a) illustrates such a partition. 
 
The sets $Q_1,\ldots,Q_k$ are ordered by the $x$-coordinate,
 and the indices of the labels of the points at each set
  is in increasing order. Therefore, if we place the points of the $i$th set between the lines  
 $x=2(i-1) n$ and $x=(2i-1)n$, then the resulting point set $Q'$
 would be a $(k,n)$-group, labelled by a smart labelling.
 Finally, we  adjust the $y$-coordinates of the points 
 according to the position of the corresponding vertices in $P_3$.
 Let the resulting point set be $S$.
 Figure~\ref{fig:p}(b) illustrates the vertex locations,
 where $P_1=(v_1,v_2,\ldots,v_n)$, $P_2=(v_{11},v_1,\ldots,v_3)$,
  and $P_3=(v_6,v_{11},\ldots,v_{10})$. 

\begin{figure}[pt]
\centering
\includegraphics[width=\textwidth]{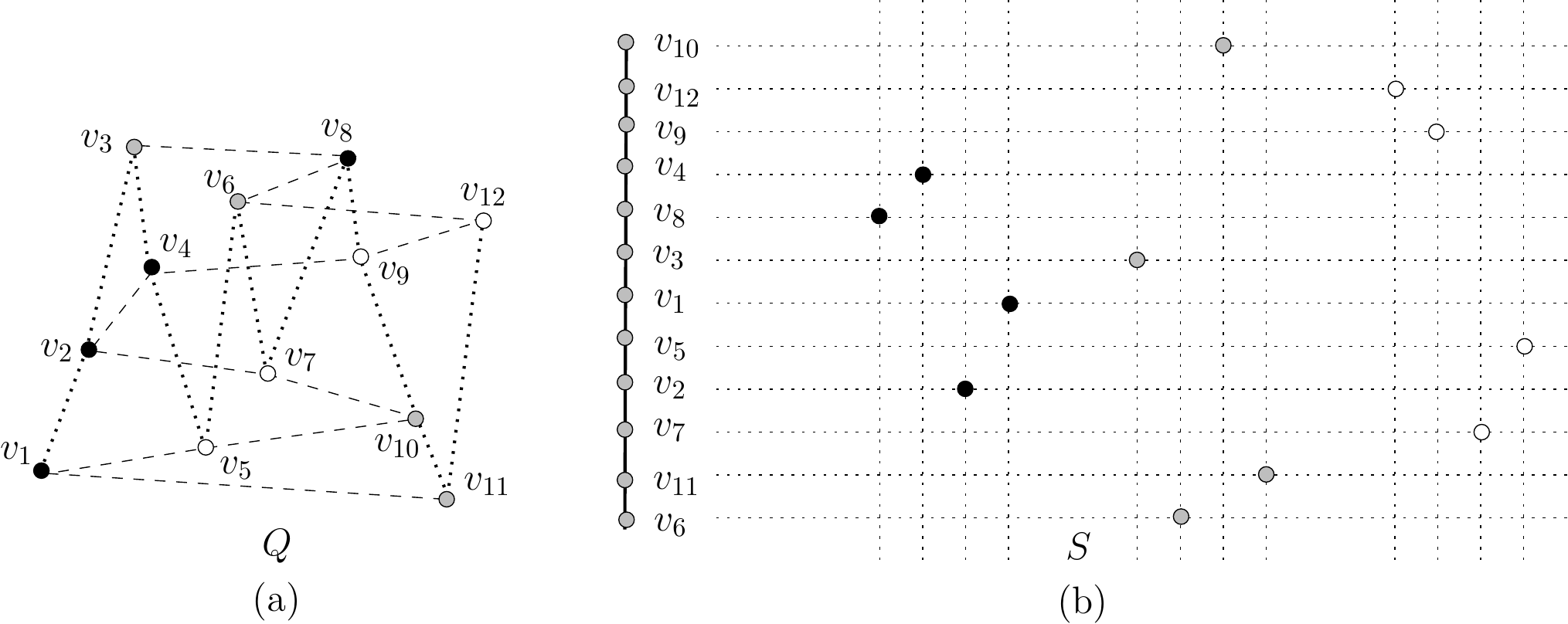}
\caption{Creating vertex locations for drawing thickness-3 graphs, where $P_1, P_2$ and $P_3$ are shown in dotted, dashed and thick solid lines, respectively.}
\label{fig:p}
\end{figure}

\textbf{Edge Routing: }It is straightforward to observe that the path $P_1$ is a 
 necklace for the current labelling of the points of $S_{k,n}$. 
 Therefore, by Lemma~\ref{lem:snake}, we can construct an uphill drawing of $P_1$ on $S$. 
 Observe that for every set $S'\in S_{k,n}$, the corresponding points are  monotone  in $Q$, i.e., the points of $S'$ are ordered along the $x$-axis  either in increasing or decreasing order of their $y$-coordinates in $Q$. Therefore,  relabelling the points according to the increasing order of 
 their $y$-coordinates in $Q$ will produce another smart labelling of $S$, and the corresponding necklace would be  the path $P_2$. Therefore, we can use 
 Lemma~\ref{lem:snake} to construct  an uphill drawing of $P_2$ on $S$.
 Since the height of the points of $S$ are adjusted according to the 
 vertex ordering on $P_3$, connecting the points  of $S$ 
 from top to bottom with straight line segments yields a $y$-monotone drawing of $P_3$.

We now route the edges of $G_i$ 
 that do not belong to $P_i$, where $1\le i\le 3$.
 Since $P_3$ is drawn as  a $y$-monotone polygonal path, 
 we can use the technique of Erten and Kobourov~\cite{EK05} to 
 draw the remaining  edges of $G_3$.
 To draw the edges of $G_2$, we insert two points $p_0$ and $p_{n+1}$ to the left and 
 right of all the points of $S$, respectively. Then the drawing of 
 the remaining edges of $G_1$ and $G_2$ is similar to the edge routing
 described in the proof of Theorem~\ref{thm:general}. 
 That is, if the edge $e=(u,v)$ lies above (resp., below) the spine, then
 we draw it using two $x$-monotone polygonal chains from
 $p_0$ (resp., $p_{n+1}$). Otherwise, if $e$ crosses the spine,
 then we draw three $x$-monotone polygonal chains, one from
 $u$ to $p_0$, another from $p_0$ to $p_{n+1}$, and the third one
 from $v$ to $p_{n+1}$. Since $k \le O(\sqrt{n})$, 
  the number of bends on
 $e$ is $O(\sqrt{n})$. Finally, we remove the degeneracies, which 
 increases the bends per edge by a small constant.

\subsubsection{Construction when $\boldsymbol{t=4}$} 
\label{sec:tis4}
We now show that the technique for drawing thickness-3 graphs can be 
 generalized to draw thickness-4 graphs with the same bend complexity. 
 
Let $G_1,\ldots,G_4$ be the planar subgraphs of $G$, and let $P_1,\ldots,P_4$ be the corresponding spinal paths. While constructing the vertex locations,
 we use a new $y$-coordinate assignment for the points of 
 $S$. Instead of placing the points according to the vertex ordering on the 
 path $P_3$, we create a particular order, by transposing the $x$- and $y$-axis, that would help to 
 construct uphill drawings of $P_3$ and $P_4$ with bend complexity $O(\sqrt{n})$.
 That is, we first create a $(k',n)$-group $S'_{k',n}$ using $P_3$ and $P_4$,  where $k'\in  O(\sqrt{n})$, in a similar 
 way that we created $S_{k,n}$ using $P_1$ and $P_2$. We then  
 adjust the $y$-coordinates of the points of $S$ according to the
 order these points appear in $S'_{k',n}$. Appendix B includes an example of such a construction. 
 
The construction of $G_1$ and $G_2$ remains the same as described in
 the previous section. However, since $P_3$ and $P_4$ now admit
 uphill drawings on $S$ with respect to $y$-axis, the drawing 
 of $G_3$ and $G_4$ are now analogous to the construction of 
 $G_1$ and $G_2$.

\subsubsection{Construction when $\boldsymbol{t>4}$} 
\label{sec:tislarge}
De Bruijn~\cite{classic2} observed that the result of Erd\"os-Szekeres~\cite{classic}    can be generalized to higher dimensions.
  Given a  sequence $\rho$ of $n$ tuples, each of size $\kappa$, one can find a subsequence of at least $n^{1/\lambda }$
  tuples, where $\lambda = 2^\kappa$,  such that they are monotone (i.e., increasing or decreasing) in every dimension. 
  This result is a repeated application of  Erd\"os-Szekeres result~\cite{classic} at each dimension. 
  We now show how to partition $\rho$ into few monotone sequences.
  
We use the partition algorithm of Bar{-}Yehuda and Sergio Fogel~\cite{Bar-YehudaF98} that   partitions a given 
 sequence of $n$ numbers into at most $2  \sqrt{n} $ monotone subsequences. 
 It is straightforward to restrict the size of the subsequences to $\sqrt{n}$, without increasing 
 the number of subsequences, i.e., by repeatedly extracting a monotone sequence of length exactly $\sqrt{n}$.
 Consequently, one can partition $\rho$ into 
 $2\sqrt{n}$ subsequences, where each subsequence is of length $\sqrt{n}$,
 and monotone  in the first dimension.
 By applying the partition algorithm on each of these 
 subsequences, we can find $2\sqrt{n} \cdot 2\sqrt{\sqrt{n}}$
 subsequence, each of which is of length $\sqrt{\sqrt{n}}$,
  and monotone in the first and second dimensions.
 Therefore, after $\kappa$ steps, we obtain a partition of $\rho$ into 
 $2^\kappa \cdot (n^{1/2} \cdot n^{1/4} \cdot \ldots \cdot n^{1/2^\kappa} = 2^\kappa \cdot n^{1-(1/\lambda)}$ monotone
 subsequences, where $\lambda = 2^\kappa$.
  We use this idea to extend our drawing algorithm to higher thickness.  
  
Let $G_1,\ldots,G_t$ be the planar subgraphs of $G$,  and let $P_1,\ldots,P_t$  be the corresponding 
 spinal paths. Let $v_1,v_2,\ldots,v_n$ be the vertices of $G$. Construct a corresponding sequence
 $\rho=(\tau_1,\tau_2,\ldots,\tau_n)$ of $n$ tuples, where each tuple is of size $t$, and the
 $i$th element of a tuple $\tau_j$ corresponds to the position of the corresponding vertex
 $v_j$ in $P_i$, where $1\le i\le t$ and $1\le j\le n$. We now partition  $\rho$ into a set
 of $2^t \cdot n^{1-(1/\beta)}$  monotone subsequences, where $\beta = 2^t$. 
 
 For each of these monotone sequences, we create an ordered set of consecutive points along the $x$-axis,
 where the vertex $v_j$ corresponds to the point $p_j$. It is now straightforward to observe that these 
 sets  correspond to a $(k,n)$-group $S_{k,n}$, where $k \le   2^t \cdot n^{1-(1/\beta)}$. Furthermore, since each group
 corresponds to a monotone sequence of tuples, for each $P_i$, the positions of the corresponding
 vertices are either increasing or decreasing. Hence, every path $P_i$ corresponds to a necklace
 for  some smart labelling of $S_{k,n}$. Therefore, by Lemma~\ref{lem:snake}, we can construct
 an uphill drawing of $P_i$ on $S$. We now add the remaining edges of $G_i$ following the 
 construction described in Section~\ref{sec:tis3}. Since $k \le 2^t \cdot n^{1-(1/\beta)}$, the number of bends is  
 bounded by $O(2^t \cdot n^{1-(1/\beta)})$.

Observe that all the points in the above construction have the same $y$-coordinate. Therefore,
 we can improve the construction by distributing the load equally among the $x$-axis and
 $y$-axis as we did in Section~\ref{sec:tis4}. Specifically, we 
 draw the graphs $G_1,\ldots,G_{\lceil t/2\rceil}$ using the uphill drawings of their
 spinal paths with respect to the $x$-axis, and the remaining graphs using the uphill
 drawings of their spinal paths with respect to the $y$-axis.      Consequently, the bend complexity decreases
 to $O(\sqrt{2}^t \cdot n^{1-(1/\beta')})$, where $\beta' = 2^{\lceil t/2 \rceil}$. 
 
 We can improve this bound further by observing that we are free to choose any arbitrary
 vertex labelling for $G$ while creating the initial sequence of tuples. Instead of using an arbitrary
 labelling, we could label the vertices according to their  ordering on some spinal path, which
 would reduce the bend complexity to  $O(\sqrt{2}^{t-2} \cdot n^{1-(1/\beta'')})$, where $\beta'' = 2^{\lceil (t-2)/2 \rceil }$.


\begin{theorem}
\label{thm:specific}
Every $n$-vertex graph $G$  of thickness $t\ge 3$ admits a drawing on $t$ layers with bend complexity
 $ O(\sqrt{2}^{t} \cdot n^{1-(1/\beta)})$, where $\beta = 2^{\lceil (t-2)/2 \rceil }$.
\end{theorem}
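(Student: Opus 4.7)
The plan is to assemble into a clean proof the construction sketched in Section~\ref{sec:tislarge}, combining three ingredients: iterated partitioning of tuple sequences into coordinatewise-monotone subsequences (via Erd\"os--Szekeres and Bar-Yehuda--Fogel), Lemma~\ref{lem:snake}'s uphill drawing of the necklace of a $(k,n)$-group with $O(k)$ bends per edge, and the auxiliary-point edge-routing from the proof of Theorem~\ref{thm:general}.

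First I would fix a monotone topological book embedding of each planar subgraph $G_i$, denote its spinal path by $P_i$, and split the index set $\{1,\ldots,t\}$ into two blocks $I_x,I_y$ of sizes $\lceil t/2\rceil$ and $\lfloor t/2\rfloor$, to be handled along the $x$- and $y$-axes respectively. Within each block I would select a \emph{primary} path $P_{i_x}\in I_x$ and $P_{i_y}\in I_y$, and arrange the vertex $x$-coordinates (resp.\ $y$-coordinates) so that they preserve the vertex order along $P_{i_x}$ (resp.\ $P_{i_y}$). This makes each primary coordinate trivially monotone and is the essential saving that takes the exponent from $\lceil t/2\rceil$ down to $\lceil t/2\rceil-1=\lceil(t-2)/2\rceil$. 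For the remaining $|I_x|-1$ paths in $I_x$, I would form the corresponding position-tuples and iterate the Bar-Yehuda--Fogel partition to obtain $k_x=O(2^{\lceil t/2\rceil-1}\cdot n^{1-1/\beta})$ subsequences that are coordinatewise monotone; these become the groups of a $(k_x,n)$-group whose smart labelling makes every $P_i\in I_x$ a necklace. The symmetric process on $I_y$ yields a $(k_y,n)$-group along the $y$-axis.

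With the point set $S$ in place, I would invoke Lemma~\ref{lem:snake} on each $P_i$ along its designated axis to obtain an uphill drawing with $O(k_x)$ or $O(k_y)$ bends per spinal edge, and then route the non-spine edges of each $G_i$ exactly as in the proof of Theorem~\ref{thm:general}, via two auxiliary points flanking $S$, at a cost of $O(k_x+k_y)$ bends per edge. Summing these contributions yields the claimed $O(\sqrt{2}^{t}\cdot n^{1-1/\beta})$ bound with $\beta=2^{\lceil(t-2)/2\rceil}$, since $\max\bigl(2^{\lceil t/2\rceil-1},\,2^{\lfloor t/2\rfloor-1}\bigr)=O(\sqrt{2}^{t})$.

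The main obstacle I expect is verifying the simultaneous compatibility of the two axis-wise $(k,n)$-groups: a single placement of $S$ must make every $P_i$'s vertex positions appear monotonically within each group on its designated axis. This resolves because within each $x$-block group---a set of consecutive points along the $x$-axis---the positions on every $P_i\in I_x$ are monotone by the coordinatewise monotonicity enforced by the partition, and analogously for the $y$-axis; tracking the direction of monotonicity then supplies the rightward/leftward labels required by Lemma~\ref{lem:snake}. No new geometric idea is needed beyond those already developed for the $t=3$ and $t=4$ cases, but the bookkeeping of which path drives which axis, and which path is primary on each side, has to be done carefully to realize the $\lceil(t-2)/2\rceil$ exponent rather than the weaker $\lceil t/2\rceil$ one.
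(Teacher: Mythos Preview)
Your proposal is correct and follows essentially the same approach as the paper: split the $t$ spinal paths between the two axes, use one path per axis as the primary ordering so that only $\lceil t/2\rceil-1=\lceil(t-2)/2\rceil$ dimensions of iterated Erd\H{o}s--Szekeres/Bar-Yehuda--Fogel partitioning are needed on each side, build the resulting $(k,n)$-groups, invoke Lemma~\ref{lem:snake} for the necklace drawings, and route the remaining edges via the two auxiliary points as in Theorem~\ref{thm:general}. The compatibility issue you flag is indeed the right thing to check, and your resolution---that the $x$-axis grouping and the $y$-axis grouping act on independent coordinates, so each block's monotonicity constraints are satisfied independently---is exactly how the paper's $t=4$ construction (Section~\ref{sec:tis4}) handles it.
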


\section{Drawing Graphs of Linear Arboricity $\boldsymbol{k}$}
\label{sec:arbo}
In this section we construct polyline drawings, where the layer number and bend complexities are functions of the linear arboricity of the input graphs.     We show that the bandwidth of a   graph can be bounded in terms of its linear arboricity and the number of vertices, and then the result  follows from an application of  Lemma~\ref{lem:simple}.
 
 The \emph{bandwidth} of an $n$-vertex graph $G=(V,E)$ is
 the minimum integer $b$ such that the vertices can be labelled 
 using distinct integers from $1$ to $n$ satisfying the condition that for any edge $(u,v)\in E$,
 the absolute difference between the labels of $u$ and $v$ is at most $b$.
 The following lemma proves an upper bound on the  bandwidth of graphs.

\begin{lemma}  
\label{lem:band}
Given an $n$-vertex graph $G=(V,E)$ with linear arboricity $k$, the bandwidth of $G$ is at most $  \frac{3(k-1)n}{(4k-2)}$. 
\end{lemma}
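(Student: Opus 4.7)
The plan is to convert the given linear-arboricity decomposition of $G$ into an explicit vertex labeling $\sigma \colon V \to \{1, \ldots, n\}$ and then verify the bandwidth bound directly on each edge. First, I would invoke the hypothesis to write $G = F_1 \cup \cdots \cup F_k$, where each $F_i$ is a disjoint union of paths; for every vertex $v$ and every $i$, record the position of $v$ inside the path of $F_i$ that contains $v$. This per-forest positional data is the only structural input the rest of the argument needs.

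Next, I would build $\sigma$ by selecting one of the $F_i$ as a backbone, laying its paths end-to-end along positions $1, 2, \ldots, n$ (so the backbone's edges automatically have bandwidth $1$), and then permuting contiguous blocks of labels to simultaneously flatten the other $k-1$ forests. The denominator $4k-2 = 2(2k-1)$ suggests dividing $\{1, \ldots, n\}$ into $2(2k-1)$ equal blocks of size $\frac{n}{4k-2}$ and rearranging them in a carefully chosen cyclic pattern, so that any path of any $F_i$ that has entered the "left half" of a block is not later forced to leap far to the right. The verification step then amounts to showing that an edge of $F_i$, whose endpoints are consecutive on some path of $F_i$, can span at most $3(k-1)$ consecutive blocks, hence at most $3(k-1) \cdot \frac{n}{4k-2} = \frac{3(k-1)n}{4k-2}$ positions, which is exactly the claimed bound.

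The main obstacle will be the block-permutation scheme in the second step: designing it so that every $F_i$'s edges stay short at the same time requires balancing the "leftward" and "rightward" excursions across all $k$ forests, and pinning down precisely the constants $3(k-1)$ and $4k-2$ is where the combinatorial heart of the argument lies. I expect that the verification itself, once the construction is fixed, reduces to a routine (though careful) case analysis at block boundaries: each edge of $F_i$ lives in a path, and the path can enter and leave at most $3(k-1)$ blocks before one of the other $k-1$ forests forces it to stop. One may also need an additive lower-order correction to absorb ceiling effects and the behaviour of very short components of the $F_i$, which would plausibly be folded into an $O(1)$ term or handled by a direct argument when $n$ is small.
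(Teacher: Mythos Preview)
Your proposal is not yet a proof: the entire substance lies in the ``block-permutation scheme'' that you explicitly leave unspecified, and the verification you sketch (``each edge spans at most $3(k-1)$ blocks'') cannot be checked without knowing the permutation. There is also an internal inconsistency: if you lay out the backbone forest so its edges have bandwidth~$1$ and \emph{then} permute blocks, the backbone edges crossing block boundaries no longer have bandwidth~$1$; you would need to argue that those edges too stay within $3(k-1)$ blocks, and nothing in the plan covers this. As written, the proposal names the target but supplies no mechanism for hitting it.

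The paper's construction is concrete and rather different from a block-cycling scheme. It fixes one spanning path $P_1$ and a parameter $x$, then builds the ordering as $\sigma = \sigma_1 \circ \sigma_2 \circ \cdots \circ \sigma_k \circ \sigma_{k+1}$, where $\sigma_1$ consists of the first $x$ vertices of $P_1$; for each $i = 2, \ldots, k$, $\sigma_i$ lists (in order of their anchors in $\sigma_1$) the $P_i$-neighbours of $\sigma_1$ not already placed; and $\sigma_{k+1}$ is the remainder of $P_1$ in order. Since each vertex of $\sigma_1$ has at most two $P_i$-neighbours, $|\sigma_i| \le 2x$ for $i>1$. A short case analysis then bounds the bandwidth by $\max\{n-x,\; 2x(k-1)\}$, and balancing the two terms gives $x = n/(2k-1)$ and bandwidth $\tfrac{2(k-1)n}{2k-1}$. (That is in fact the bound the paper's argument delivers; the quantity $\tfrac{3(k-1)n}{4k-2}$ in the lemma statement equals $\tfrac{3}{4}$ of it and is really the bend-complexity figure obtained after feeding the bandwidth into Lemma~\ref{lem:simple}.) The idea you are missing is exactly this ``prefix of $P_1$, then its neighbours in each $P_i$, then the suffix of $P_1$'' layout---it is not a block permutation, and the balance is between a single cut point $x$ and the $2x(k-1)$ neighbour slab, not among $2(2k-1)$ equal blocks.
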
  
\begin{proof}
Without loss of generality assume that $G$ is a union of $k$ spanning paths $P_1,\ldots, P_k$. For any ordered sequence $\sigma$, let $\sigma(i)$ be the element at the $i$th position, and let $|\sigma|$ be the number of elements in $\sigma$. We now construct an ordered sequence $\sigma = \sigma_1 \circ  \sigma_2 \circ  \ldots \circ  \sigma_k\circ  \sigma_{k+1}$ of the vertices in $V$,  as follows. 

\begin{description}
\item[$\sigma_1$:] We initially place the first $x$  vertices of $P_1$ in the sequence, where the exact value of $x$ is to be determined later.
\item[$\sigma_2$:] We then place the vertices that are neighbors of $\sigma_1$ in $P_2$,
 in order, i.e., we first place the neighbors of $\sigma_1(1)$, then the neighbors of  $\sigma_1(2)$ that have not been placed yet, and so on.
\item[$\sigma_i$:] For each $i=3,\ldots,k$, we place the vertices that are
  neighbors of $\sigma_1$ in $P_i$ in order.
\item[$\sigma_{k+1}$:] We next place the remaining vertices of $P_1$ in order.
\end{description}

Figure~\ref{fig:arboricity}(a)  illustrates an example for three paths with $x=2$. Observe that $|\sigma_1|\le x$, and $|\sigma_t|\le 2x$, where $1< t\le k$. We now compute an upper bound on the bandwidth of $G$ using the vertex ordering of $\sigma$.
 
 For any $i,j$, where $1\le i <j\le k+1$, let  $\sigma_{i,j}$ be the sequence $ \sigma_i  \circ  \ldots \circ  \sigma_j$.  The edges of $P_1$ that are in $\sigma_1$ have bandwidth 1, and those that are in   $\sigma_1(x) \circ (\sigma \setminus \sigma_1)$ have bandwidth at most $(n-x)$,  e.g., see Figure~\ref{fig:arboricity}(b). 
  Now let $(v,w)$ be an edge of $G$ that does not belong to $P_1$. We compute the bandwidth of $(v,w)$  considering the following cases.

\begin{description}
\item[Case 1.] If none of $v$ and $w$ belongs to $\sigma_1$, then the bandwidth of $(v,w)$ is at most $(n-x)$.  
\item[Case 2.] If both $v$ and $w$ belong to $\sigma_1$, then the bandwidth of $(v,w)$ is at most $x$. 
\item[Case 3.] If at most one of $v$ and $w$ belongs to $\sigma_1$, then without loss of generality assume that $v$ belongs to $\sigma_1$. Since $(v,w)$ does not belong to $P_1$, we may assume that $w$ belongs to the path $P_t$, where $1<t\le k$. By the  construction of $\sigma$, $w$ belongs to $\sigma_{1,t}$, e.g., see Figure~\ref{fig:arboricity}(b). Without loss of generality assume that $w$ belongs to $\sigma_t'$, where $1< r \le t$. Let $u$ be the $q$th vertex in the sequence $\sigma$. Then the position of $w$ cannot be more than $q+2x\cdot (r-2) + 2q$, where the term $2x\cdot (r-2)$ corresponds to length of $\sigma_2\circ \ldots \circ\sigma_{r-1}$. Therefore, the bandwidth of the edge $(v,w)$ is at most $2x\cdot (r-2) + 2q\le 2x(r-1) \le 2x(t-1)$.
\end{description}

Observe that the bandwidth of the edges of $P_1$ is upper bounded by $(n-x)$. The bandwidth of any edge that belongs to $P_t$, where $1<t\le k$ is at most $2x(t-1)$. Consequently, the bandwidth of $G$ is at most $\max\{n-x, 2x(k-1)\} \le \frac{(2k-2)n}{(2k-1)}$, where $x= \frac{n}{(2k- 1)}$.
\end{proof}

\begin{figure}[h]
\centering
\includegraphics[width=\textwidth]{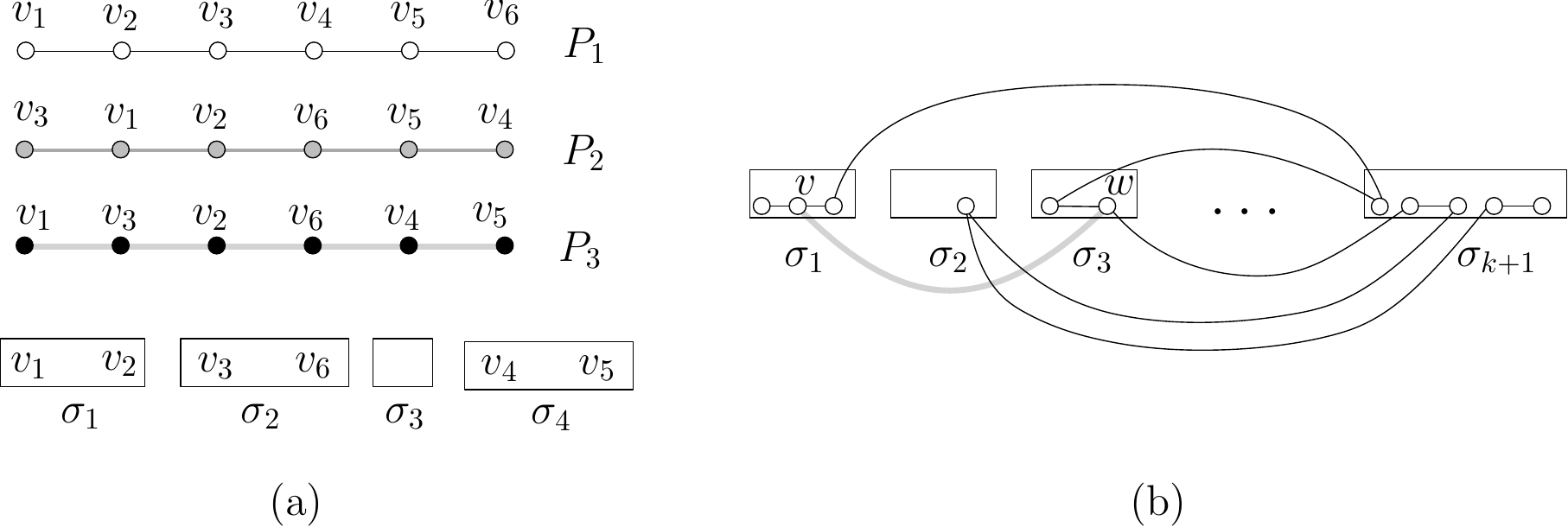}
\caption{(a) Construction of $\sigma$. (b) A schematic representation of $P_1$ and $(v,w)$, where $(v,w)$ belongs to $P_3$. }
\label{fig:arboricity}
\end{figure}

The following theorem is  immediate from the proof of  Lemmas~\ref{lem:simple} and~\ref{lem:band}.
\begin{theorem}
Every $n$-vertex graph with linear arboricity $k$ can be drawn on $k$ layers  with at most 
 $   \frac{3(k-1)n}{(4k-2)}  < 0.75n $ bends per edge.
\end{theorem}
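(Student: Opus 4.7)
The plan is to combine the vertex ordering produced by Lemma~\ref{lem:band} with the uphill path-drawing construction of Lemma~\ref{lem:simple}, executed independently on each of the $k$ layers.

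First, I would apply Lemma~\ref{lem:band} to the input graph $G$ to obtain a labelling $v_1, v_2, \ldots, v_n$ of its vertices realising bandwidth $b \le \frac{2(k-1)n}{2k-1}$; equivalently, for every edge of $G$ the two endpoint labels differ by at most $b$. I would then set up the point configuration $S = \{p_0, p_1, \ldots, p_n, p_{n+1}\}$ on a semicircular arc exactly as in Lemma~\ref{lem:simple}, place $v_i$ at $p_i$, and keep this placement fixed across all layers.

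Next, using the definition of linear arboricity, I would decompose the edges of $G$ into $k$ linear forests; each such forest can be completed to a Hamiltonian path $P_t$ on $V$ by adding auxiliary edges that are not edges of $G$. For each $P_t$ I would invoke Lemma~\ref{lem:simple} with the path $v_1, v_2, \ldots, v_n$ replaced by the path $P_t$ traversed in its own order (the lemma only requires a bijection between path-vertices and arc-points), obtaining a planar uphill drawing of $P_t$ on $S$. Deleting the auxiliary edges from the drawing of layer $t$ yields a planar drawing of the corresponding linear forest, and the union across $t = 1, \ldots, k$ is the required $k$-layer polyline drawing of $G$.

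Finally, I bound the bend complexity of the surviving edges. For any original edge $e = (u, w)$ of $G$, the bandwidth condition guarantees that strictly between $u$ and $w$ on the arc there are at most $b$ points of $S$; Lemma~\ref{lem:simple} then asserts that the polygonal chain drawing $e$ carries at most $\tfrac{3}{4}$ of that count of bends, that is, at most
\[
\frac{3b}{4} \;\le\; \frac{3(k-1)n}{4k-2} \;<\; 0.75\,n
\]
bends, completing the theorem. I do not foresee any substantive obstacle; the only point requiring a moment's care is verifying that the addition of auxiliary edges inside each layer does not corrupt the bend count on genuine edges of $G$, which is immediate because Lemma~\ref{lem:simple}'s per-edge guarantee depends only on how many arc points lie between the endpoints of that particular edge, and this quantity is controlled by the bandwidth for every edge of $G$, regardless of what other edges share the layer.
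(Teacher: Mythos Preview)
Your proposal is correct and matches the paper's own argument: the paper simply states that the theorem ``is immediate from the proof of Lemmas~\ref{lem:simple} and~\ref{lem:band},'' and your write-up spells out exactly that combination --- take the bandwidth-$b$ ordering with $b\le \tfrac{2(k-1)n}{2k-1}$, place the vertices on the semicircular arc, draw each spanning path via Lemma~\ref{lem:simple}, and invoke the per-edge $3\alpha/4$ bound from the proof of that lemma with $\alpha\le b$. The extension of linear forests to Hamiltonian paths is the same harmless normalisation the paper makes in the proof of Lemma~\ref{lem:band}.
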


\section{Conclusions}
\label{sec:conclusion}
 
In this paper we have developed algorithms to draw graphs on 
 few planar layers and with low   bend complexity.  
    Although our algorithms do not construct drawings with integral
    coordinates, it is straightforward to see that these drawings
    can also be constructed on polynomial-size integer grids,
    where all  vertices and bends have integral coordinates. We leave 
  the task of finding compact grid drawings achieving the same upper bounds  as a direction for  future research.   
  
We believe our upper bounds on bend complexity to be nearly tight,
 but we require more evidence to support this intuition. The only related 
 lower bound is that of Pach and Wenger~\cite{PachW01}, who
  showed that given a planar graph $G$ and a 
 unique location to place each vertex of $G$, $\Omega(n)$ bends are  sometimes 
  necessary to construct a planar polyline drawing of $G$ with the given vertex locations.    
  Therefore, a challenging research direction  would be to prove tight lower bounds on the bend complexity while drawing thickness-$t$ graphs on $t$ layers.

\bibliography{Thickness-Layer-Bends}

\begin{figure} [pt]
\section*{Appendix A: A Larger View of Figure~\ref{fig:circ}}\vspace{1cm}
\centering
\includegraphics[width=\textwidth]{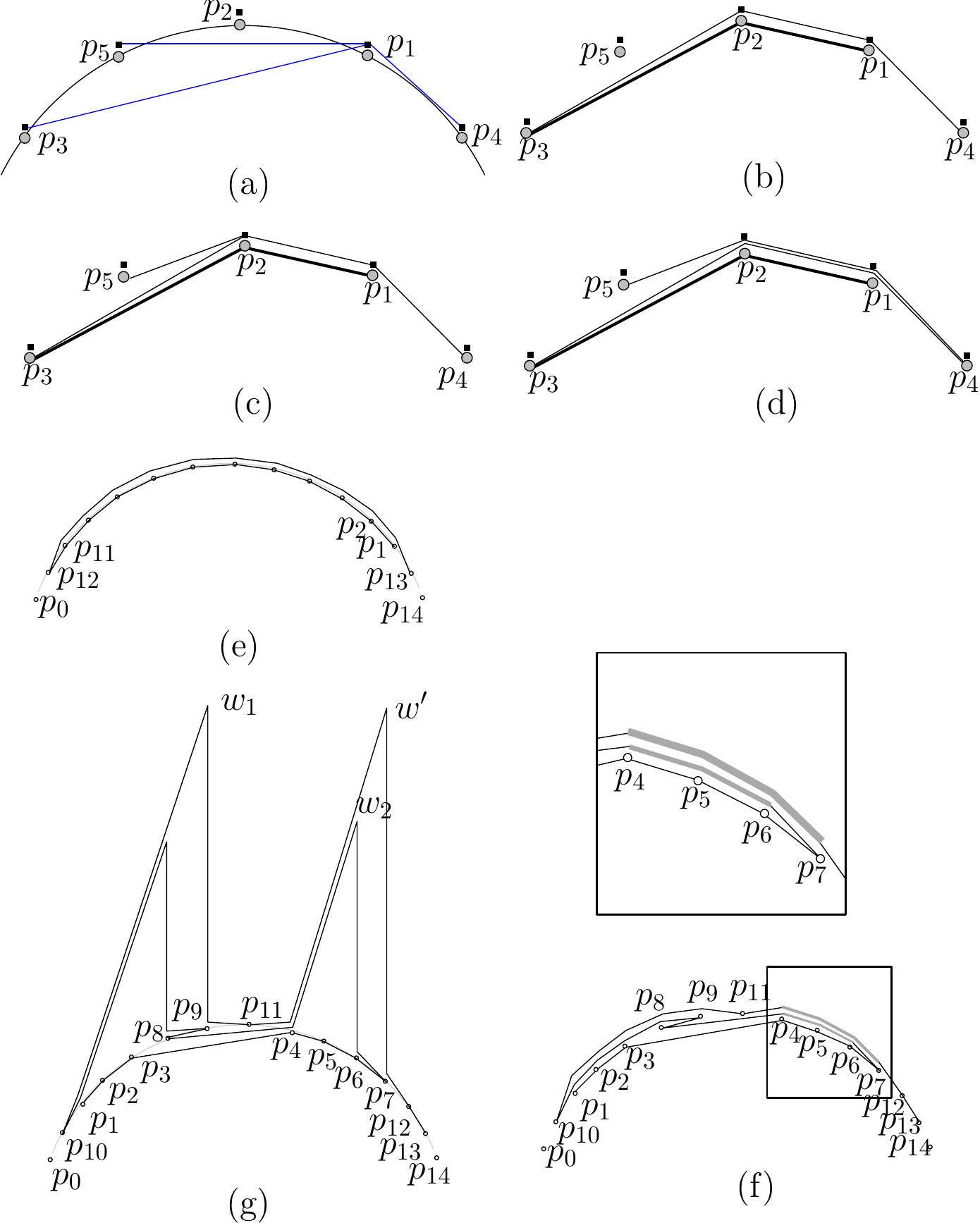}
\caption{Illustration for the proof of Lemma~\ref{lem:simple}. (a) Construction of the point set, and the anchor points. The anchor points are shown in black squares. (b)--(d) Construction of $D'_n$. (e) A scenario when the number of bends may be large. (f)--(g) Reducing bend complexity. 
}
\label{fig:circappendix}
\end{figure} 

\hfill
\newpage
 
\section*{Appendix B: Illustration for Drawing Thickness-4 Graphs} 

Here we illustrate the construction of the point set, as described in Section~\ref{sec:tis4}. Let $P_1,\ldots,P_4$ be the spinal paths of $G_1,\ldots,G_4$. 
 Figure~\ref{fig:1}(a) illustrates $P_1$ and $P_2$ in black and gray, respectively. Figure~\ref{fig:1}(b) illustrates $P_3$ and $P_4$ in black and gray, respectively. 
 
\begin{figure} [h]
\centering
\includegraphics[width=.9\textwidth]{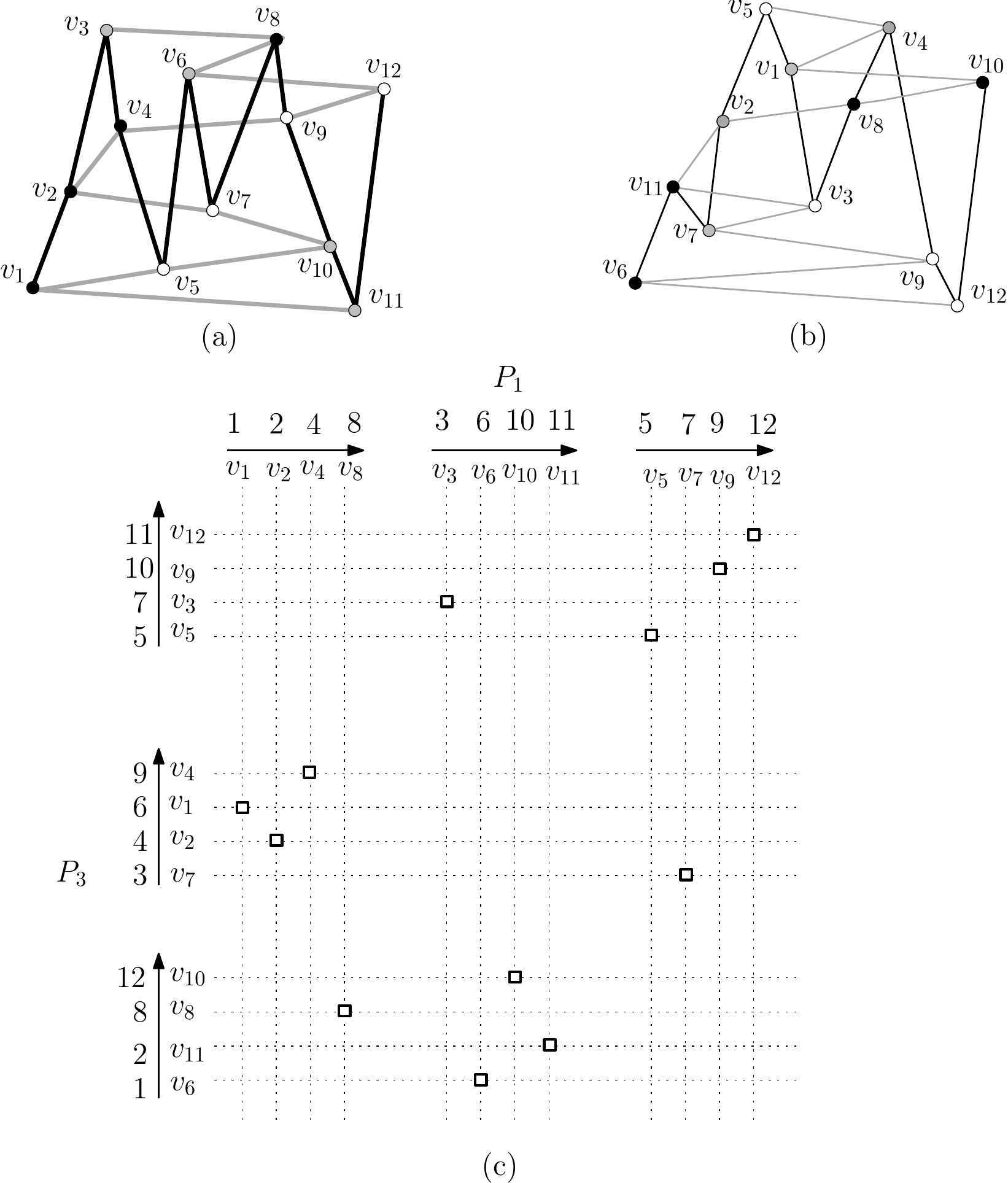}
\caption{(a) A point set, constructed from the paths $P_i$, where $i\in \{1,2\}$, by placing each vertex $v$ at $(\delta_1(v),\delta_2(v))$. Here $\delta_i(v)$ is the position of $v$ on $P_i$.  (b) A point set, constructed from the paths $P_i$, where $i\in \{3,4\}$, by placing each vertex $v$ at $(\delta_3(v),\delta_4(v))$.   (c) The final point set, and the corresponding $(k,n)$-groups. The numbers denote    the vertex positions on the corresponding spinal path. The arrows  illustrate whether the corresponding sets are leftward or rightward. 
}
\label{fig:1}
\end{figure}

\begin{figure} [pt]
\centering
\includegraphics[width=\textwidth]{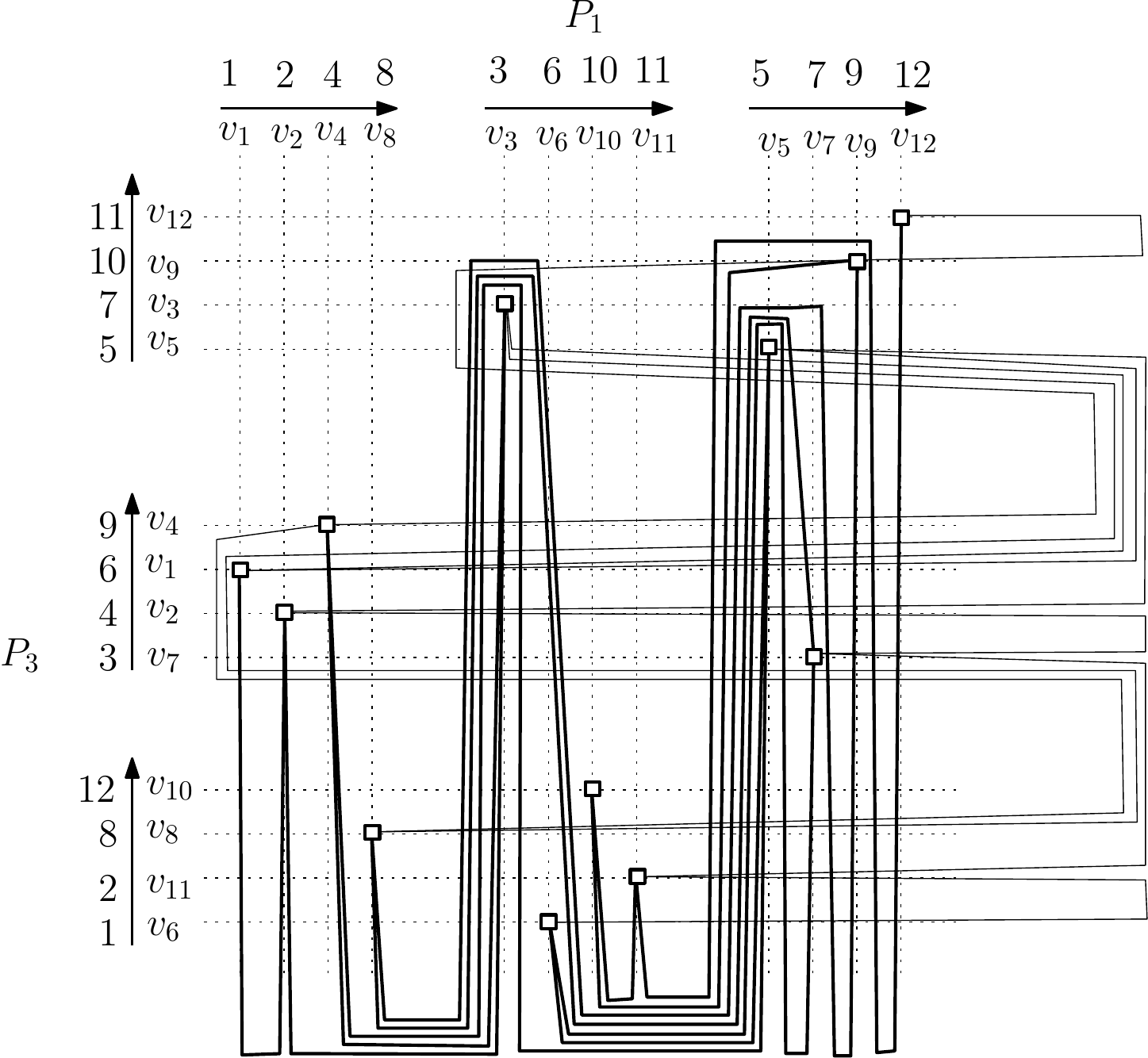}
\caption{Drawings of $P_1$ and $P_3$ on the point set of Figure~\ref{fig:1}(c).
}
\label{fig:2}
\end{figure}
\begin{figure} [pt]
\centering
\includegraphics[width=\textwidth]{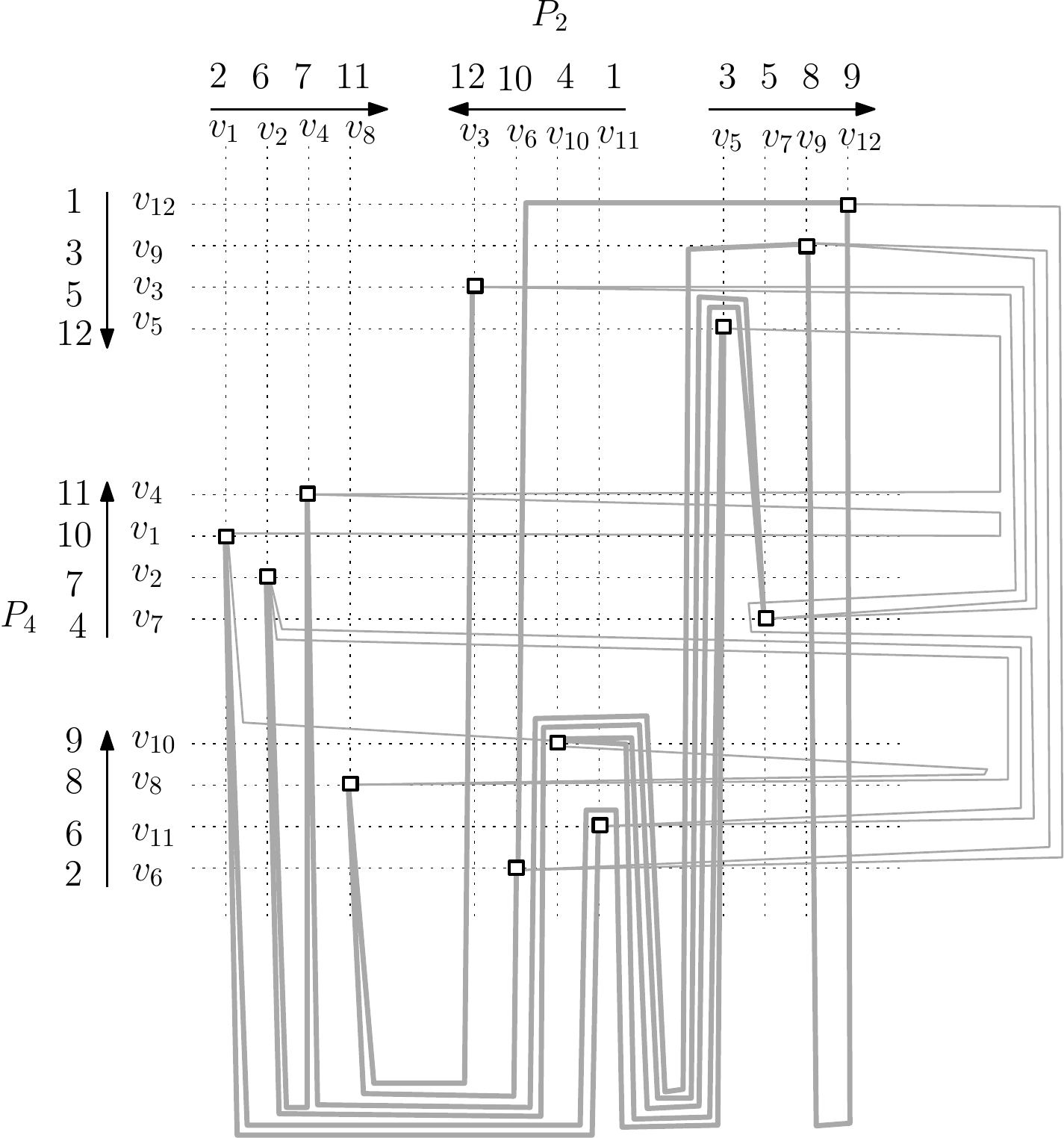}
\caption{Drawings of $P_2$ and $P_4$ on the point set of Figure~\ref{fig:1}(c).
}
\label{fig:3}
\end{figure}

\end{document}